\documentclass[runningheads]{llncs}

\usepackage{hyperref}
\usepackage{enumitem}
\usepackage{amstext}
\usepackage{amssymb}
\usepackage{amsmath}
\usepackage{bm}
\usepackage{relsize}
\usepackage{tikz}
\usetikzlibrary{positioning,arrows}
\usepackage[font={normal,it},labelfont=bf,justification=justified]{caption}
\usepackage{float}
\usepackage{graphicx}
\usepackage[lined,boxed]{algorithm2e}
\usepackage{makeidx}



\newcommand{\OBJ}{{\mathcal{D}}}
\newcommand{\PE}{\mbox{PN}}
\newcommand{\EFF}{\mbox{EFF}}
\newcommand{\LOC}{\mbox{LOC}}

\newcommand{\RE}{\mathbb{R}}
\newcommand{\REd}{\mathbb{R}^d}
\newcommand{\REp}{\mathbb{R}^d_{+}}
\newcommand{\NEp}{\mathbb{N}^d}
\newcommand{\WPP}{\succsim}
\newcommand{\PP}{\succ}
\newcommand{\WST}{\mbox{WST}}

\newcommand{\PROBA}{\mathbb{P}}
\newcommand{\ESP}{\mathbb{E}}
\newcommand{\VAR}{\text{Var}}
\newcommand{\COV}{\text{Cov}}

\newcommand{\CA}{\mathcal{A}}
\newcommand{\CE}{\mathcal{E}}
\newcommand{\CF}{\mathcal{F}}
\newcommand{\CC}{\mathcal{C}}

\newcommand{\NBH}{\mathcal{N}}

\binoppenalty=\maxdimen
\relpenalty=\maxdimen

\begin{document}

\title{On Existence, Mixtures, Computation and Efficiency in Multi-objective Games}

\titlerunning{Multi-objective Games}

\author{Anisse Ismaili}

\authorrunning{Anisse Ismaili}

\institute{RIKEN, Center for Advanced Intelligence Project AIP, Tokyo, Japan\\
anisse.ismaili@riken.jp}

\maketitle            

\begin{abstract}
In a multi-objective game, each individual's payoff is a \emph{vector-valued} function of everyone's actions. Under such vectorial payoffs, Pareto-efficiency is used to formulate each individual's best-response condition, inducing Pareto-Nash equilibria as the fundamental solution concept. In this work, we follow a classical game-theoretic agenda to study equilibria. Firstly, we show in several ways that numerous pure-strategy Pareto-Nash equilibria exist. Secondly, we propose a more consistent extension to mixed-strategy equilibria. Thirdly, we introduce a measurement of the efficiency of multiple objectives games, which purpose is to keep the information on each objective: the multi-objective coordination ratio. Finally, we provide algorithms that compute Pareto-Nash equilibria and that compute or approximate the multi-objective coordination ratio.
\keywords{Multi-objective Game \and Pareto-Nash Equilibrium}
\end{abstract}

\vspace*{-5mm}
\section{Introduction}
\vspace*{-2mm}

Game theory and microeconomics assume that individuals evaluate outcomes into scalars. However, bounded rationality can hardly be modeled consistently by agents simply comparing scalars: \emph{``The classical theory does not tolerate the incomparability of oranges and apples.''} \cite{simon1955behavioral}. 
Money is another case of scalarization of the values of outcomes. For instance, while `making money' theoretically creates value \cite{adam1776inquiry}, the tobacco industry making money and killing approximately six million people every year \cite{world2011report} is hardly a creation of value\footnote{Tobacco consumers are free to value and choose cigarettes how it pleases them. However, is value the same when they inhale, as when they die suffocating?}. 

In this work, we assume that agents evaluate outcomes over a finite set of distinct objectives\footnote{It is a backtrack from the subjective theory of value, which typically aggregates values on each objective/commodity into a single scalar by using an utility function.}; hence, agents have vectorial payoffs. For instance, in the case of tobacco consumers, this slightly more informative model would keep the information on these three objectives \cite{conover2014smoking}: smoking pleasure, cigarette cost and consequences on life expectancy. 
In literature, this model was called games with vectorial payoffs, multi-objective games or multi-criteria games; and several applications were considered (see e.g. \cite{zeleny1975games,wierzbicki1995multiple}).
Indeed, behaviors are less assumptively modeled by a partial preference: the Pareto-dominance. Using Pareto-efficiency in place of best-response condition induces Pareto-Nash (PN) equilibria as the solution concept for stability, without even assuming that individuals combine the objectives in a precise manner. Pareto-Nash equilibria encompass the outcomes, even under unknown, uncertain or inconsistent preferences. 

This paper more particularly addresses two unexplored issues.\linebreak
(1) The algorithmic aspects of multi-objective games have never been studied.
(2) Also, the efficiency of Pareto-Nash equilibria has never been a concern. 

\textit{Related literature on mixed-strategies and similar strategy spaces.}\quad 
Games with vectorial payoffs, or multi-objective games, were firstly introduced in the late fifties by Blackwell and Shapley \cite{blackwell1956analog,shapley1959equilibrium}. 
The former shows the existence of a mixed-strategy Pareto-Nash equilibrium in finite two-player zero-sum multi-objective games. 
The later generalizes this existence result to finite multi-objective games. 
Both use a definition of mixed-strategy Pareto-Nash equilibria that suffers an inconsistency: pure-strategy Pareto-Nash equilibria are not included in the set of mixed-strategy Nash equilibria (see Sec. \ref{sec:mixed}). Nonetheless, there is an established literature on games with vector payoffs that uses this definition. 
Deep formal works generalized known existence results \cite{shapley1959equilibrium} to individual action-sets being compact convex subsets of a normed space \cite{wang1993existence}.
Weak Pareto-Nash equilibria can be approximated \cite{morgan2005approximations}. 

\textit{Works related to pure strategies and algorithms.}\quad 
 \cite{wierzbicki1995multiple} achieves to characterize the entire set of Pareto-Nash equilibria by mean of augmented Tchebycheff norms. However, the number of dimensions that parameterize these Tchebycheff norms is algorithmically prohibitive.
\cite{patrone2007multicriteria} shows that a MO potential function guarantees that a Pareto-Nash equilibrium exists in finite MO games.

In Section 3, we show in three different settings that pure-strategy Pareto-Nash equilibria are guaranteed to exist, or very likely to be numerous.
In Section 4, we show an inconsistency in the current concept of mixed-strategy PN equilibrium, and propose an extension to solve this flaw. 
In Section 5, in the fashion of the price of anarchy \cite{koutsoupias1999worst}, we define a measurement of the worst-case efficiency of individualistic behaviors in games, compared to the optimum. In the multi-objective case, it is far from trivial, as  worst-case equilibria and  optima are not uniquely defined. 
In Section 6, we show how to compute the set of (worst) pure-strategy Pareto-Nash equilibria for several game structures, and provide algorithms to compute and approximate our multi-objective coordination ratio.\footnote{For the proofs, see the long paper:}


\section{Preliminaries}

\begin{definition}
A \emph{multi-objective game} (MO game, or MOG) is defined
by the following tuple $\left(N,\{A^{i}\}_{i\in N},\OBJ,\{\bm{u}^{i}\}_{i\in N}\right)$:
\begin{itemize}
\item The agents set is $N=\lbrace1,\ldots,n\rbrace$.
Agent $i$ decides action $a^{i}$ in action-set $A^{i}$.
\item The shared list of objectives is denoted by $\OBJ=\lbrace1,\ldots,d\rbrace$
and every agent $i\in N$ gets her payoff from function $\bm{u}^{i}:A=A^{1}\times\ldots\times A^{n}\rightarrow\mathbb{R}^{d}$
which maps every overall action to a \emph{vector-valued} payoff;
e.g., real $u_{k}^{i}(\bm{a})$ is the payoff of agent $i$ on objective
$k$ for action-profile $\bm{a}=(a^{1},\ldots,a^{n})$.
\end{itemize}
\index{multi-objective game, MO game, MOG}
\index{notations!$N$, set of agents}
\index{notations!$n$, number of agents}
\index{notations!$i$, individual agent}
\index{notations!$A^i$, action-set of agent $i$}
\index{notations!$a^i$, action of agent $i$}
\index{notations!$A=A^{1}\times\ldots\times A^{n}$, set of action-profiles}
\index{notations!$\bm{a}=(a^{1},\ldots,a^{n})\in A$, action-profile}
\index{notations!$\OBJ$, set of objectives}
\index{notations!$k$, an objective}
\index{notations!$d$, number of objectives}
\index{notations!$\bm{u}^{i}:A\rightarrow\mathbb{R}^{d}$, vector-valued payoff function of agent $i$}
\index{notations!$u_{k}^{i}(\bm{a})$, payoff of agent $i$ on objective
$k$ for action-profile $\bm{a}$}
\end{definition}
\begin{figure*}[t]
\caption{Didactic toy example in Ocean Shores city.}
\includegraphics[scale=0.22]{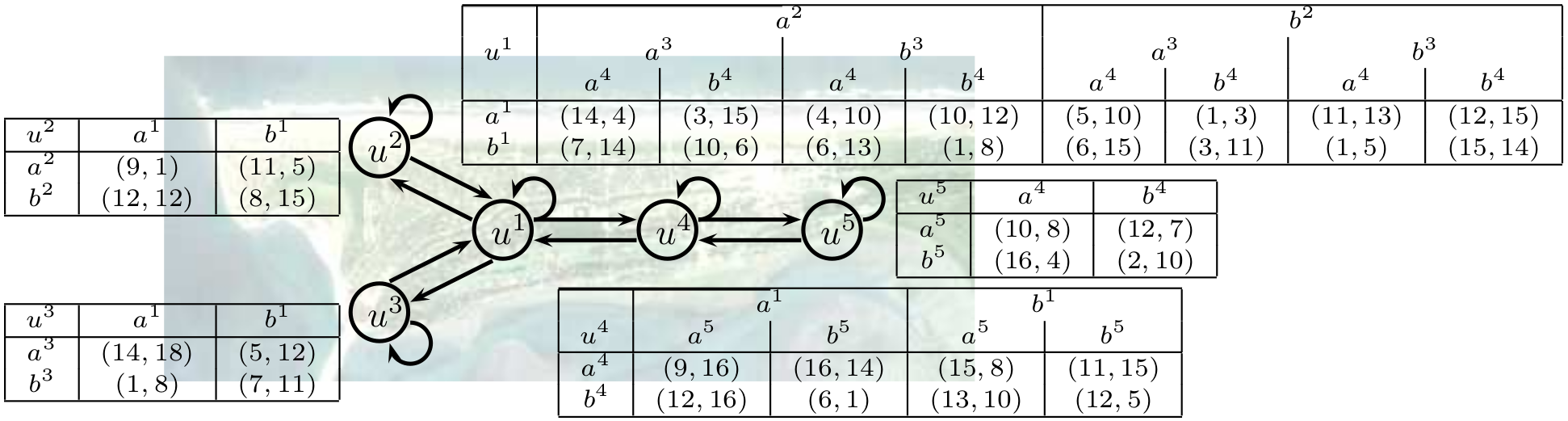}

\small
There are five shops (the nodes) in Ocean Shores: $N=\{1,\ldots,5\}$.  
Each shop/agent $i$ decides between two activities: $A^{i}=\{a^{i},b^{i}\}$;
for instance: renting bikes or buggies, selling clams or fruit, etc. 
That is, agent $i$, in his payoff table $u^i$, decides row $a^{i}$ or $b^{i}$. 
The edges define neighborhoods around every agent.
The payoff of each agent also depends on the actions of her
neighbors, and is differentiated on two objectives $\OBJ=\{1,2\}$ that it would hardly make sence to aggregate, for instance:
sales revenue (to buy their daily lives) and the remaining natural resources (so that, in the future, their children could also live). (Here, the payoffs
are random integers.)
\end{figure*}

In the subjective theory of value, every individual evaluates
her endowment $(u_{1}^{i},\ldots,u_{d}^{i})$ however she wants based
on an utility function $v^{i}:\REd\rightarrow\RE$. The theory of multi-objective
games \cite{blackwell1956analog,shapley1959equilibrium} aims at allowing for individuals that behave according to several unknown, uncertain, or inconsistent utility functions.
These utility functions are reduced to their
common denominator: the Pareto-dominance, as defined below. 
That vector $\bm{y}\in\REd$
weakly-Pareto-dominates and respectively \emph{Pareto-dominates} vector
$\bm{x}\in\REd$ is denoted and defined by:
\index{Pareto-dominance}
\index{weak Pareto-dominance}
\index{notations!$\WPP$, weak Pareto-dominance}
\index{notations!$\PP$, Pareto-dominance}
\begin{eqnarray*} 
\bm{y}\WPP \bm{x} & \Leftrightarrow & \forall k\in\OBJ,\quad  y_k\geq x_k,\\ 
\bm{y}\PP \bm{x} &\Leftrightarrow & \forall k\in\OBJ,\quad y_k\geq x_k\mbox{~~and~~}\exists k\in\OBJ, y_k> x_k.
\end{eqnarray*} 
For the preferences of individuals, given an adversary action-profile\linebreak $\bm{a}^{-i}=(a^j\mid j\neq i)$,
\index{adversary action-profile}
\index{notations!$\bm{a}^{-i}$, adversary action-profile}
this defines a partial rationality
on set\linebreak $\bm{u}^{i}(A^{i},\bm{a}^{-i})=\{\bm{u}^{i}(b^{i},\bm{a}^{-i})\mid b^i\in A^i\}$, which is less assumptive than complete orders,
since it does not presume any individual utility function $v^{i}:\RE^{d}\rightarrow\RE$.
Formally, given a finite set of vectors $X\subseteq\REd$, the set of
\emph{Pareto-efficient} vectors is defined as the following set of
non-Pareto-dominated vectors: \vspace*{-1mm}
\index{Pareto-efficient}
\index{notations!$\EFF[X]$, Pareto-efficient vectors of set $X$}
\index{worst vectors}
\index{notations!$\WST[X]$, worst vectors of set $X$}
\[
\EFF[X]=\{\bm{y}\in X~~|~~\forall \bm{x}\in X,\mbox{~not~}(\bm{x}\PP \bm{y})\}.
\]
Since Pareto-dominance is a partial order,
it induces a multiplicity of Pareto-efficient vectors. These
are the best compromises between objectives. 
Similarly, let $\WST[X]=\{\bm{y} \in X|\forall \bm{x}\in X,\mbox{not}(\bm{y}\PP \bm{x})\}$
denote the worst vectors. 

In a multi-objective
game, individuals behave according to the Pareto -dominance, inducing the solution
concept \emph{Pareto-Nash equilibrium} ($\PE$),\index{Pareto-Nash equilibrium, PN equilibrium}\index{notations!$\PE$, set of Pareto-Nash equilibria} 
formally
defined as any action-profile $\bm{a}\in A$ such that for every agent
$i\in N$: \vspace*{-2mm}
\[
\bm{u}^{i}(a^{i},\bm{a}^{-i})\quad\in\quad\EFF\left[\quad \{\bm{u}^{i}(b^{i},\bm{a}^{-i})\mid b^{i}\in A^{i}\}\quad\right].
\]
We call these conditions \emph{Pareto-efficient responses}.\index{Pareto-efficient responses}
Let $\PE\subseteq A$ denote the set of Pareto-Nash equilibria. For instance, in Figure 1, action-profile $(b^{1},b^{2},a^{3},b^{4},b^{5})$
is a PN equilibrium, since each action, given the adversary local action
profile (column), is Pareto-efficient among the given agent's two actions (rows). In this example, there are $13$ Pareto-Nash equilibria (depicted in Figure 2).

Such an encompassing solution concept provides the first phase for
bounding the efficiency of games. It is well-known that individualistic behaviors can be far from the optimum/maximum in terms of utilitarian evaluation $u(\bm{a})=\sum_{i\in N}u^{i}(\bm{a})$. 
In single-objective games\footnote{In the single-objective case, Pareto-Nash and Nash equilibria coincide.}, this inefficiency
is measured by the \emph{Coordination Ratio} $\mbox{CR}=\frac{\min[u(PN)]}{\max[u(A)]}$
\cite{koutsoupias1999worst}, which is more commonly known as the
\emph{Price of Anarchy} \cite{roughgarden2009intrinsic}. 
However,
in the multi-objective case, the utilitarian social welfare 
\index{utilitarian social welfare}
\index{notations!$\bm{u}:A\rightarrow\RE^d$, utilitarian social welfare}
$\bm{u}(\bm{a})=\sum_{i\in N}\bm{u}^{i}(\bm{a})$
is a vector-valued function $\bm{u}:A\rightarrow\RE^d$ with respect to $d$ objectives.
To study the efficiency of Pareto-Nash equilibria, we introduce:\vspace*{-1mm}
\begin{itemize}
\setlength{\itemsep}{0em}
\item set of \emph{equilibria outcomes} $\quad\CE\quad=\quad\bm{u}(\PE)\quad(\subset\REd),$ 
\item set of \emph{efficient outcomes} $\quad\CF\quad=\quad\EFF[\bm{u}(A)]\quad(\subset\REd)$.
\end{itemize}
\index{equilibria outcomes}
\index{efficient outcomes}
\index{notations!$\CE$, equilibria outcomes}
\index{notations!$\CF$, efficient outcomes}

\begin{figure}[t]
\caption{Biobjective set of utilitarian outcomes $\bm{u}(A)\subset\RE^2$ in Ocean Shores.}
\begin{minipage}{0.4\textwidth}
\includegraphics[scale=0.33]{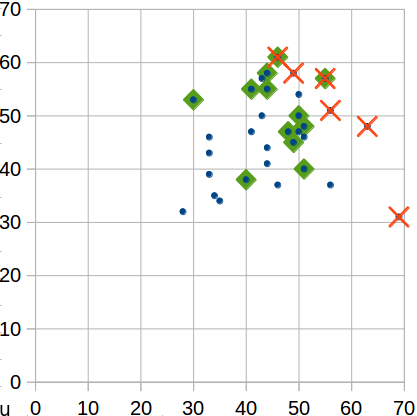}
\end{minipage}
~
\begin{minipage}{0.57\textwidth}
\small
The utilitarian outcomes are a set of vectors, depicted above. 
Worst case equilibria and optima are not uniquely defined.
The ratio of set of equilibria outcomes
$\CE$ ($\Diamond$) to set of efficient outcomes $\CF$ ($\times$) would be a ratio of sets, which remains undefined. It would be crucial that such a definition keeps information for every objective. E.g., we want to remember that a car pollutes, or that a cigarette kills, not just that it makes some economic agents happy.
\end{minipage}
\end{figure}

\vspace*{-4mm}
\section{Numerous pure strategy Pareto-Nash equilibria exist.}
\vspace*{-2mm}

This section demonstrates the existence of pure strategy Pareto-Nash equilibria. 
Firstly, 
we write how the existence results from single-objective (SO) games can be retrieved in MO games.\index{single-objective, SO}
Secondly,
we generalize the equilibria existence results of single-objective potential games to multi-objective potential games.
Thirdly, we show that on average, numerous Pareto-Nash equilibria exist.

\subsection{Reductions from MO games to SO games}

In the literature, most rationalities are constructed by means of a utility function
$v^{i}:\RE^{d}\rightarrow\RE$, which is monotonic with respect to
the Pareto-dominance, that is:\vspace*{-2mm}
\begin{eqnarray*}
\bm{x}\PP \bm{y} &\Rightarrow& v^{i}(\bm{x})> v^{i}(\bm{y})
\end{eqnarray*}
Such functions are called \emph{Pareto-monotonic}\index{Pareto-monotonic function}.
For instance, these include
positive weighted sums, Cobb-Douglas utilities, and utility functions in general
as assumed by the Arrow-Debreu theorem. 

A straightforward consequence is that the set of Pareto-efficient
vectors contains the optima of any Pareto-monotonic utility function. Formally, given
a MOG $\Gamma$, from Pareto-monotonic utility functions
$V=(v^{i}:\REd\rightarrow\RE|i\in N)$ the single-objective game $V\circ\Gamma=(N,\{A^{i}\}_{i\in N},\{v^{i}\circ \bm{u}^{i}\}_{i\in N})$
results from the given utilities, and one has: $\PE(V\circ\Gamma)\subseteq\PE(\Gamma).$
\index{notations!$V\circ\Gamma$, composition of utilities and an MOG into a single-objective game}
In other words, Pareto-Nash equilibria
encompass the game's outcome, regardless of the unknown
preferences.

Also, inclusion $\PE(V\circ\Gamma)\subseteq\PE(\Gamma)$
 argues for the guaranteed existence of numerous PN equilibria in MO games, under the following assumptions:\vspace*{-1mm}
\begin{enumerate}
\setlength{\itemsep}{0em}
\item the structure of the SO game on every objective is the same,
\item equilibria are guaranteed in that structure of SO game, 
\item and a positive linear combination of the MO game induces that SO game.
\end{enumerate} 
This remark is the canonical argument used in previous
results 
(e.g. \cite{shapley1959equilibrium,patrone2007multicriteria}).

\subsection{Multi-objective potentials}

We now explore potential games, as introduced for congestion games
by Robert Rosenthal \cite{rosenthal1973class,monderer1996potential}
and recently generalized to MO games \cite{patrone2007multicriteria}.
The existence of an MO potential function guarantees that at least
one Pareto-Nash equilibrium exists \cite{patrone2007multicriteria}. 
We go further and completely characterize the set of PN equilibria.

\begin{definition}
An MO game $\Gamma=\left(N,\{A^{i}\}_{i\in N},\OBJ,\{\bm{u}^{i}\}_{i\in N}\right)$
admits \emph{(exact) potential function} $\bm{\Phi}:A\rightarrow\RE^{d}$
if and only if 
for every action-profile $\bm{a}\in A$, for every agent
$i\in N$ and for every action $b^{i}\in A^{i}$, one has: 
\index{potential function}
\[
\forall k\in\OBJ,\quad
\Phi_k(b^{i},\bm{a}^{-i})-\Phi_k(\bm{a})\quad=\quad u^{i}_k(b^{i},\bm{a}^{-i})-u^{i}_k(\bm{a}).
\]
\end{definition}
That is, function $\bm{\Phi}$ additively accumulates the vectorial values of each deviation. 

\begin{definition}
Given a vector valued function $\bm{\Phi}:A\rightarrow\RE^{d}$, let the set of \emph{locally efficient} action-profiles $\LOC(\bm{\Phi})$ be the set of action-profiles $\bm{a}\in A$ such
that:
\index{locally efficient}
\index{notations!$\LOC(\bm{\Phi})$, locally efficient action-profiles}
\[
\bm{\Phi}(\bm{a})\quad\in\quad\EFF[\{\bm{\Phi}(b^{i},\bm{a}^{-i})\in\RE^{d}\mid i\in N,b^{i}\in A^{i}\}].
\]
\end{definition}
Set $\LOC(\bm{\Phi})$ corresponds to a generalization of local optima for function $\bm{\Phi}$, and is non-empty if sets $N$, $\OBJ$ and $A$ are finite. Moreover, due to the loose requirement for local efficiency, set $\LOC(\bm{\Phi})$ is likely to contain numerous action-profiles.

\begin{theorem}\label{th:1}
Let $\Gamma=\left(N,\{A^{i}\}_{i\in N},\OBJ,\{\bm{u}^{i}\}_{i\in N}\right)$
be a \emph{finite multi-objective game}\index{finite multi-objective game}\footnote{In a finite multi-objective game, sets $N$, $\{A^{i}\}_{i\in N}$  and $\OBJ$ are finite.} that admits potential function $\bm{\Phi}$.
Then, it holds that:
\[
\PE(\Gamma)\quad=\quad\LOC(\bm{\Phi})\quad\neq\quad\emptyset.
\]
\end{theorem}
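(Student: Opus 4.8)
The plan is to prove the two assertions separately: the set equality $\PE(\Gamma)=\LOC(\bm{\Phi})$, and then the non-emptiness $\LOC(\bm{\Phi})\neq\emptyset$.

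First I would isolate the elementary but crucial observation that Pareto-dominance depends only on the vector of componentwise \emph{differences}: for any $\bm{y},\bm{x}\in\REd$, one has $\bm{y}\PP\bm{x}$ exactly when $y_k-x_k\geq 0$ for all $k$ and $y_k-x_k>0$ for some $k$. Applying this once to the pair $\bm{y}=\bm{\Phi}(b^i,\bm{a}^{-i})$, $\bm{x}=\bm{\Phi}(\bm{a})$ and once to $\bm{y}=\bm{u}^i(b^i,\bm{a}^{-i})$, $\bm{x}=\bm{u}^i(\bm{a})$, the defining identity of the potential, namely $\Phi_k(b^i,\bm{a}^{-i})-\Phi_k(\bm{a})=u^i_k(b^i,\bm{a}^{-i})-u^i_k(\bm{a})$ for every $k$, shows that these two difference-vectors coincide coordinatewise. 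Hence the key equivalence, valid for every profile $\bm{a}$, every agent $i$, and every action $b^i\in A^i$:
\[
\bm{\Phi}(b^i,\bm{a}^{-i})\PP\bm{\Phi}(\bm{a})\quad\Longleftrightarrow\quad\bm{u}^i(b^i,\bm{a}^{-i})\PP\bm{u}^i(\bm{a}).
\]
In words, a unilateral deviation Pareto-improves the potential precisely when it Pareto-improves the deviating agent's own vectorial payoff.

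Next I would unfold both membership conditions into their contrapositive, ``no improving deviation'' forms and match them through this equivalence. A profile $\bm{a}$ fails to be a Pareto-Nash equilibrium iff some agent has a Pareto-improving response, i.e.\ there exist $i$ and $b^i$ with $\bm{u}^i(b^i,\bm{a}^{-i})\PP\bm{u}^i(\bm{a})$; and $\bm{a}$ fails to be locally efficient iff some single-agent deviation Pareto-dominates $\bm{\Phi}(\bm{a})$ within the neighbourhood, i.e.\ there exist $i$ and $b^i$ with $\bm{\Phi}(b^i,\bm{a}^{-i})\PP\bm{\Phi}(\bm{a})$ (here one uses that $\bm{\Phi}(\bm{a})$ itself lies in the neighbourhood, so only deviations can break its efficiency). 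The key equivalence renders these two failure conditions identical, quantifier by quantifier; taking complements yields $\PE(\Gamma)=\LOC(\bm{\Phi})$.

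Finally, for non-emptiness I would appeal to global efficiency. Since $N$, $A$ and $\OBJ$ are finite, $\bm{\Phi}(A)$ is a finite set of vectors, so $\EFF[\bm{\Phi}(A)]\neq\emptyset$; choose $\bm{a}$ with $\bm{\Phi}(\bm{a})\in\EFF[\bm{\Phi}(A)]$. The local neighbourhood $\{\bm{\Phi}(b^i,\bm{a}^{-i})\mid i\in N,\,b^i\in A^i\}$ is a subset of $\bm{\Phi}(A)$, so if nothing in the larger set Pareto-dominates $\bm{\Phi}(\bm{a})$, then \emph{a fortiori} nothing in the smaller set does; thus $\bm{a}\in\LOC(\bm{\Phi})$. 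I expect no genuine obstacle: the entire argument rests on the translation-invariance of $\PP$, and the only point demanding care is the bookkeeping that the single ``for all agents'' efficiency condition defining $\LOC(\bm{\Phi})$ decomposes, deviation by deviation, into the per-agent efficiency conditions defining $\PE(\Gamma)$ — which it does precisely because Pareto-domination is witnessed by the existence of a single dominating point.
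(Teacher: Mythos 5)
Your proof is correct. The set equality $\PE(\Gamma)=\LOC(\bm{\Phi})$ is argued exactly as in the paper: both reduce the two membership conditions to the common statement that no unilateral deviation yields a difference vector Pareto-dominating $\bm{0}$, using that the potential identity forces the difference vectors $\bm{\Phi}(b^i,\bm{a}^{-i})-\bm{\Phi}(\bm{a})$ and $\bm{u}^i(b^i,\bm{a}^{-i})-\bm{u}^i(\bm{a})$ to coincide. Where you genuinely diverge is the non-emptiness of $\LOC(\bm{\Phi})$. The paper argues by contradiction via an improvement path: if every profile admitted a locally Pareto-improving deviation, one could build an infinite sequence $(\bm{a}_{(t)})$ with $\bm{\Phi}(\bm{a}_{(t+1)})\PP\bm{\Phi}(\bm{a}_{(t)})$, and irreflexivity plus transitivity of $\PP$ would force infinitely many distinct profiles, contradicting $|A|\leq\alpha^n$. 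You instead pick $\bm{a}$ with $\bm{\Phi}(\bm{a})\in\EFF[\bm{\Phi}(A)]$ (non-empty since $\bm{\Phi}(A)$ is finite and $\PP$ is a strict partial order) and observe that global efficiency implies local efficiency because the deviation neighbourhood is a subset of $\bm{\Phi}(A)$ containing $\bm{\Phi}(\bm{a})$ itself. Your route is shorter and more direct; the paper's route has the side benefit of exhibiting the termination of Pareto-improvement dynamics, which the authors exploit in their concluding discussion of convergence. Both are valid, and both extend to the compact-action-set generalization mentioned after the theorem with the usual adjustments.
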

This theorem completely characterizes
the set of Pareto-Nash equilibria as the set of locally efficient
action-profiles for function $\bm{\Phi}$, which is a non-empty set with numerous action-profiles. More generally, Theorem \ref{th:1} also holds when sets $N$ and $\OBJ$ are finite and sets $A^i$ are just compact.

\subsection{Likelihood of equilibrium in random games}

Another manner to study whether a $\PE$-equilibrium exists is to provide
a probability distribution on a family of finite games and then discuss the probability
of $\PE$-equilibrium existence. A similar methodology was successfully
applied \cite{goldberg1968probability,dresher1970probability,rinott2000number}
to SO games in several settings where every SO payoff $u^{i}(\bm{a})$
is independently and identically distributed by a uniform distribution
on continuous intervals $[0,1]$. At the heart of this subsection,
let random variable $Z$ denote the number of pure Nash-equilibria action-profiles
in the game. In the SO case, there is almost surely only one
best response. However, when considering MO games, a main technical difference
lies in the average number of ``best responses'' (or here, Pareto-efficient
responses), which in most cases exceeds $1$, due to the surface-like shape of the Pareto-efficient set in $\RE^{d}$, surface which is $(d-1)$
dimensional.
Here, we assume a probability distribution $\PROBA_{n,\alpha,\beta}$,
that builds randomly the Pareto-efficient response tables of an $n$-agent normal form game with $\alpha$
actions-per-agent: \index{notations!$\alpha$, number of actions-per-agent}  for every agent
$i$ and every adversary action-profile $\bm{a}^{-i}\in\prod_{j\neq i}A^{j}$,
there is a fixed number $\beta:1<\beta\leq\alpha$ of Pareto-efficient
responses, for the sake of simplicity.
 \begin{theorem} 
 Given numbers $n\geq2$ of agents, $\alpha\geq2$
of actions-per-agent and $\beta\leq\alpha$ of Pareto-efficient responses,
based on probability distribution $\PROBA_{n,\alpha,\beta}$, the
number $Z$ of Pareto-Nash equilibria satisfies $\ESP[Z] = \beta^n$ and:
\begin{eqnarray*} 
\PROBA\left( (1-\gamma)\beta^n\leq {Z} \leq (1+\gamma)\beta^n \right) & \geq & 1 - \frac{1}{\gamma^2 \beta^{n}},\quad\forall\gamma\in(0,1). \end{eqnarray*} \end{theorem}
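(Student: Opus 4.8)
The plan is to realize $Z$ as a sum of indicator variables, compute its mean exactly, and control its variance so that Chebyshev's inequality yields the stated concentration. Concretely, under $\PROBA_{n,\alpha,\beta}$ the randomness is carried by independent, uniformly-random $\beta$-subsets $E_{i,\bm{a}^{-i}}\subseteq A^{i}$, one for each agent $i$ and each adversary profile $\bm{a}^{-i}$, designating that agent's Pareto-efficient responses. For an action-profile $\bm{a}\in A$ let $X_{\bm{a}}$ be the indicator of the event $\bm{a}\in\PE$, i.e.\ of $\bigcap_{i\in N}\{a^{i}\in E_{i,\bm{a}^{-i}}\}$, so that $Z=\sum_{\bm{a}\in A}X_{\bm{a}}$.

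First I would compute the mean. For a fixed $\bm{a}$, the $n$ events $\{a^{i}\in E_{i,\bm{a}^{-i}}\}$ involve the $n$ random subsets indexed by the keys $(i,\bm{a}^{-i})$, which are pairwise distinct because their first coordinate already differs; hence these events are independent, each of probability $\beta/\alpha$, and $\PROBA(\bm{a}\in\PE)=(\beta/\alpha)^{n}$. Summing over the $\alpha^{n}$ profiles gives $\ESP[Z]=\alpha^{n}(\beta/\alpha)^{n}=\beta^{n}$.

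The crux is to bound $\VAR[Z]=\sum_{\bm{a}}\VAR[X_{\bm{a}}]+\sum_{\bm{a}\neq\bm{a}'}\COV[X_{\bm{a}},X_{\bm{a}'}]$, and the whole argument hinges on showing every covariance term is nonpositive. I would split according to the Hamming set $S=\{i:a^{i}\neq a'^{i}\}$. If $|S|\geq 2$, then for every agent $i$ one still has $\bm{a}^{-i}\neq\bm{a}'^{-i}$, so the full collection of $2n$ index keys is pairwise distinct, the two events are independent, and the covariance vanishes. The delicate case is $|S|=1$, say $S=\{i_{0}\}$: here the two constraints at agent $i_{0}$ both read off the \emph{same} random subset $E_{i_{0},\bm{a}^{-i_{0}}}$ but test two distinct actions $a^{i_{0}}\neq a'^{i_{0}}$, while at every other agent the keys differ and remain independent. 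Factoring the joint probability over the independent subsets leaves a single coupled factor $\PROBA(a^{i_{0}},a'^{i_{0}}\in E_{i_{0},\bm{a}^{-i_{0}}})=\frac{\beta(\beta-1)}{\alpha(\alpha-1)}$, and the hypergeometric inequality $\frac{\beta(\beta-1)}{\alpha(\alpha-1)}\leq(\beta/\alpha)^{2}$, valid precisely because $\beta\leq\alpha$, forces $\COV[X_{\bm{a}},X_{\bm{a}'}]\leq 0$. This negative-association phenomenon---two distinct actions competing for membership in one $\beta$-subset---is the main obstacle and the only place the hypothesis $\beta\leq\alpha$ is genuinely used.

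With all covariances nonpositive, $\VAR[Z]\leq\sum_{\bm{a}}\VAR[X_{\bm{a}}]=\alpha^{n}\,p(1-p)\leq\beta^{n}$, where $p=(\beta/\alpha)^{n}$. Chebyshev's inequality then yields $\PROBA(|Z-\beta^{n}|\geq\gamma\beta^{n})\leq\VAR[Z]/(\gamma^{2}\beta^{2n})\leq 1/(\gamma^{2}\beta^{n})$, and passing to the complementary event gives the claimed bound for every $\gamma\in(0,1)$.
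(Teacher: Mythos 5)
Your proof is correct and follows essentially the same route as the paper's: write $Z$ as a sum of equilibrium indicators, obtain $\ESP[Z]=\beta^n$ from independence across agents, bound $\VAR[Z]$ by a case analysis on how two action-profiles differ (independent when they differ in at least two coordinates, coupled only through one shared efficient-response set when they differ in exactly one), and conclude with the Bienaym\'e--Tchebychev inequality. The only, harmless, divergence is in the singly-coupled case: you use the exact hypergeometric value $\beta(\beta-1)/(\alpha(\alpha-1))$ and only need the covariances to be nonpositive, whereas the paper evaluates them exactly with the factor $(\beta-1)\beta/\alpha^2$; both give $\VAR[Z]\leq\beta^n$ and hence the stated bound.
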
 
It argues for the existence of numerous Pareto-Nash equilibria when
there are enough agents and efficient responses, and follows from the Bienaym\'e-Tchebychev inequality.
For instance, (given
$\gamma=1/2$) the probability that the number of Pareto-Nash equilibria $Z$ is between $(1/2)\beta^{n}$ and $(3/2)\beta^{n}$, is at least $1-4\beta^{-n}$, which
for $\beta=2$ efficient responses and $n=5$ agents, gives $\PROBA(16\leq Z\leq48)\geq7/8$.

\section{Consistent extension to mixed strategies}\label{sec:mixed}

To guarantee equilibrium existence by means of fixed-point theorems on compact sets \cite{vnmAndMorgenstern1944,nash1950equilibrium}, the finite action sets of every agent are expanded to include \emph{mixed strategies}. That is:
every agent $i$ decides a probability distribution
$p^{i}$ in the set $\Delta(A^{i})$ of probability distributions over
his action-set $A^{i}$. 
\index{mixed-strategy}
\index{notations!$p^{i}$ mixed strategy of agent $i$}
\index{notations!$\Delta(A^{i})$ set of mixed strategies of agent $i$}
Each payoff function $\bm{u}^{i}$ is redefined to be the expected utility
\begin{eqnarray*}
\bm{u}^{i}(\bm{p}) &=& \ESP_{\bm{a}\sim\bm{p}}[\bm{u}^{i}(\bm{a})],
\end{eqnarray*}
under the mixed-strategy profile $\bm{p}=(p^{1},\ldots,p^{n})\in\prod_{i\in N}\Delta(A^{i})$.
This defines a mixed-extension\index{mixed extension} of the original game.
The stability concept induced is called a mixed-strategy Nash equilibrium.

In MOGs, Pareto-Nash equilibria based on
their original definition by Blackwell \cite{blackwell1956analog} and Shapley \cite{shapley1959equilibrium} (below) are those usually considered \cite{borm1988pareto,corley1985games,voorneveld1999potential,zeleny1975games}.

\begin{definition}
\index{mixed-strategy Pareto-Nash equilibrium}
Given
finite MO game $\Gamma=\left(N,\{A^{i}\}_{i\in N},\{\OBJ\},\{\bm{u}^{i}\}_{i\in N}\right)$,
a mixed-strategy profile  $\bm{p}=(p^{1},\ldots,p^{n})\in\prod_{i\in N}\Delta(A^{i})$
is a mixed-strategy Pareto-Nash equilibrium if and only if it satisfies for every agent $i$: 
\[
\bm{u}^{i}(p^{i},\bm{p}^{-i})\in\EFF\left[\left\{ \bm{u}^{i}(q^{i},\bm{p}^{-i})\in\RE^{d}\mid q^{i}\in\Delta(A^{i})\right\} \right]
\]
\end{definition}

\quad The rational behind this first definition is the following.
For every agent $i$, mixed-strategy
$p^{i}\in\Delta(A^{i})$ acts as a convex-combination of set of vectorial
payoffs $\bm{u}^{i}(A^{i},\bm{p}^{-i})$ and the best-response
condition is replaced by the fact that mixed-strategy $p^{i}$ should
have a Pareto-efficient evaluation $\bm{u}^{i}(p^{i},\bm{p}^{-i})$ among the elements of this convex
set of evaluations $\{\bm{u}^{i}(q^{i},\bm{p}^{-i})\in\RE^{d}\mid q^{i}\in\Delta(A^{i})\}$.
That is, a mixed-strategy Pareto-Nash equilibrium is a pure-strategy
Pareto-Nash equilibrium in finite game $\Gamma$'s mixed extension.
However, as depicted in Figure \ref{fig:mix}, Definition 1 
fails to fulfill two
fundamental requirements: 
\begin{enumerate}
\item Pure-strategy equilibria must be included
in mixed-strategy equilibria. 
\item Mixed-strategies also enable to model a risk-averse agent.
\end{enumerate}
\begin{proof} Figure 3 demonstrates these side effects.\end{proof}

To fulfill the two requirements, instead of efficient mixed actions, we consider mixtures of efficient pure-actions. As in Figure \ref{fig:mix}, it corrects both side effects.

\begin{definition}
Given a finite multi-objective game $\left(N,\{A^{i}\}_{i\in N},\{\OBJ\},\{\bm{u}^{i}\}_{i\in N}\right)$,
a mixed-strategy Pareto-Nash equilibrium is a mixed-strategy profile\linebreak
$\bm{p}=(p^{1},\ldots,p^{n})\in\prod_{i\in N}\Delta(A^{i})$, such
that for every agent $i$ and action $a^{i}\in A^{i}$ if $a^{i}$
is played with positive probability $p^{i}(a^{i})>0$, then it holds
that 
\[
\bm{u}^{i}(a^{i},\bm{p}^{-i})\quad\in\quad\EFF\left[\bm{u}^{i}(A^{i},\bm{p}^{-i})\right].
\]
\end{definition}

\begin{figure}[t]
\caption{Single-agent three-actions bi-objective game showing inconsistencies. (The coordinates correspond to the bi-objective valuation $(u_1,u_2)$.)}\label{fig:mix}
\begin{minipage}{0.38\textwidth}
\includegraphics[scale=0.16]{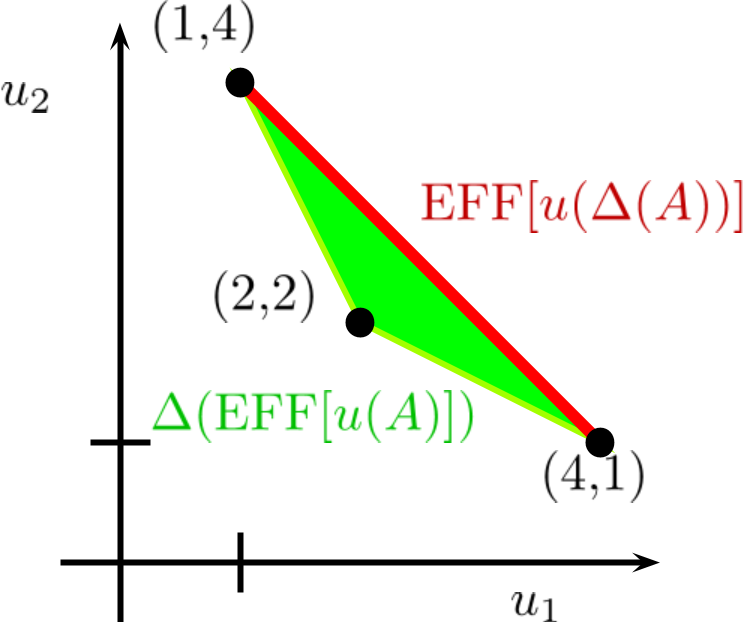}
\end{minipage}
~
\begin{minipage}{0.62\textwidth}
The three outcomes, $u(A)=\{(1,4),(2,2),(4,1)\}$,
are depicted by black dots. With Def. 4, since the
mixed outcomes are all convex-combinations of $\{(1,4),(2,2),(4,1)\}$,
the Pareto-efficient mixed-strategies are here the convex-combinations
of $\{(1,4),(4,1)\}$; and outcome $(2,2)$ is Pareto-dominated. Not every pure-strategy Pareto-Nash equilibrium is a mixed-strategy one, which is a severe inconsistency. Furthermore, since outcome
$(2,2)$ is well balanced, it may also be decided with a non-null probability, e.g.,
if the agent's utility is concave \cite{cobb1928theory}, or if she is risk-averse \cite{kahneman1979prospect}. Our
\emph{revised} definition considers instead all the convex-combinations of the Pareto-efficient pure actions
$\{(1,4),(2,2),(4,1)\}$.
\end{minipage}
\end{figure}

\quad This generalized definition connects in the single-objective case to a less know definition of Nash-equilibria (see \cite{papadimitriou2007agt}, page 30, Theorem 2.1). In this alternative definition, each mixed strategy must be a mixture of pure-strategies that are best-responses. In other words, the support of each mixed strategy must be included in the set of pure-strategy best-responses. 
Furthermore, concerning existence, since this revised definition contains the former one, (which is guaranteed to exist) the new definition is guaranteed to exist too.

\section{Multi-objective coordination ratio }

In the single-objective case, the coordination ratio measures the efficiency loss of equilibria compared to the optimum.
In MO games, we claim that it is critical 
to study efficiency with respect
to every objective.
Even after the actions, the game analyst still has access to the vectorial payoffs.  In this section, we follow
the agenda outlined in the introduction, to define a \emph{multi-objective
coordination ratio} $\text{MO-CR}[\CE,\CF]$ of the set of equilibria outcomes
$\CE$ to the set of efficient outcomes $\CF$, that fills the critical purpose
to keep information on each objective. 

First, we state the list of desirable properties that we want the ratio to satisfy.
For the purpose of having meaningful divisions and ratios, 
some vectors are positive in this section.
Given vectors $\bm{\rho},\bm{y}\in\REd$ and $\bm{z}\in\REp$, vector $\bm{\rho}\star \bm{y}\in\REd$ is defined
by $\forall k\in\OBJ,(\bm{\rho}\star \bm{y})_{k}=\rho_{k}y_{k}$. Vector
$\bm{y}/\bm{z}\in\REd$ is defined by $\forall k\in\OBJ,(\bm{y}/\bm{z})_{k}=y_{k}/z_{k}$.
Given vector $\bm{r}\in\REd$ and set of vectors $Y$, set $\bm{r}\star Y$ is defined by $\{\bm{r}\star \bm{y}\in\REp|\bm{y}\in Y\}$ and
for $\bm{r}\in\REp$, set $Y/\bm{r}$ is defined by $\lbrace \bm{y}/\bm{r}\in\REd|\bm{y}\in Y\rbrace$. 
Given $\bm{x}\in\REd$,
cone $\CC(\bm{x})$ denotes $\{\bm{y}\in\REd~|~\bm{x}\WPP \bm{y}\}$, and given $X\subset\REd$,
cone-union $\CC(X\text{)}$ is defined by $\cup_{\bm{x}\in X}\CC(\bm{x})$.
Vector $\bm{0}$ denotes a vector with $d$ zeros, and $\bm{1}$ denotes a vector with $d$ ones.

\index{notations!$\bm{\rho}\star \bm{y}$, product of vectors into a vector}
\index{notations!$\bm{y}/\bm{z}$, division of vectors into a vector}
\index{notations!$\bm{r}\star Y$, product of a vector and a set of vectors into a set of vectors}
\index{notations!$Y/\bm{r}$, division of a set of vectors by a vector into a set of vectors}
\index{cone}
\index{cone-union}
\index{notations!$\CC(\bm{x})$, cone}
\index{notations!$\CC(X)$, cone-union}

The first property that we require from $\mbox{MO-CR}[\CE,\CF]$ is to be on a \emph{multi-objective ratio
scale}\index{multi-objective ratio scale}. Given $\CE,\CF\subset\REp$ and $\bm{r}\in\REp$, the following shall hold. 
\begin{eqnarray} 
\mbox{MO-CR}[\CE,\CF] 
&\quad\subseteq& 
\REd\label{eq:ratio:6}\\ 
\mbox{MO-CR}[\{\bm{0}\},\CF] 
&\quad=& 
\{\bm{0}\}\label{eq:ratio:7}\\ 
\mbox{MO-CR}[\bm{r}\star\CE,\CF] 
& \quad=& 
\bm{r}\star\mbox{MO-CR}[\CE,\CF]\label{eq:ratio:8}\\ 
\mbox{MO-CR}[\CE,\bm{r}\star\CF] 
& \quad=& 
\mbox{MO-CR}[\CE,\CF]/\bm{r}\label{eq:ratio:9}\\ 
\CE\subseteq\CF 
& \quad\Leftrightarrow& 
\bm{1}\in \mbox{MO-CR}[\CE,\CF]\label{eq:ratio:10} 
\end{eqnarray}
To fix these ideas one can think of $d=1$ and given two positive numbers $e,f$, to the properties of ratio $e/f$.
Equation (\ref{eq:ratio:6}) states that MO-CR is expressed in a multi-objective
space. Equations (\ref{eq:ratio:7}), (\ref{eq:ratio:8}) and (\ref{eq:ratio:9})
state that MO-CR is well-centered and sensitive on each objective
to multiplications of outcomes, which is what we want.
For instance, if $\CE$ is three times
better on objective $k$, then so is MO-CR. If there are two times
more efficient opportunities in $\CF$ on objective $k'$, then MO-CR
is one half on objective $k'$. In other words, the efficiency of
each objective independently reflects on MO-CR in a ratio-scale. Equation
(\ref{eq:ratio:10}) states that if all equilibria outcomes are efficient
(i.e. $\CE\subseteq\CF$), then this amounts to $\bm{1}\in\text{MO-CR}[\CE,\CF]$,
i.e. the MO game is fully efficient.

These requirements rule out a set of first ideas.
For instance, we can rule out comparisons of equilibria outcomes to ideal vector $\mathcal{I}=(\max_{z\in\CF}\lbrace z_{k}\rbrace|k\in\OBJ)$
does not satisfy requirement (5) to have $\bm{1}\in\text{MO-CR}[\CE,\CF]$ when $\CE\subseteq\CF$.
By starting from a social
welfare $f:\REp\rightarrow\RE_+$, taking ratio $\min f(\CE)/\max f(\CF)$,  
induces the same problem. 

This measurement should also be non-dictatorial, in the sense that no point of
view should be imposed on what the overall efficiency is: no
prior choice must be done on the set of efficient outcomes. Formally,
if two sets of efficient outcomes $\CF,\CF'\subset\REp$ differ even
slightly, then this must reflect at least for some numerator set $\CE$ onto ratio $\text{MO-CR}[\CE,\CF]$. 
This amounts to a disjunction on efficient outcomes.
Finally $\text{MO-CR}[\CE,\CF]$
must provide guaranteed efficiency ratios that hold for every equilibrium
outcome $\bm{y}\in\CE$, which amounts to a conjunction on equilibria outcomes.
The definition below follows from these requirements. 

Firstly, the efficiency of \emph{one} equilibrium $\bm{y}\in\CE$ is quantified
without prior choices on what efficient outcome should we compare it to, as required:
\begin{eqnarray*} 
R[\bm{y},\CF]\quad =\quad \bigcup_{\bm{z}\in\CF}\CC(\bm{y}/\bm{z}), 
\end{eqnarray*} 
The idea is that we do not take sides with any efficient outcome. 
Instead, we define with flexibility
and without a dictatorship a disjunctive set of guaranteed
efficiency ratios, 
which lets the differences between two sets of efficient outcomes $\CF,\CF'\subset\REp$ reflect onto ratio $\text{MO-CR}[\CE,\CF]$.

Secondly, in MOGs, on average, there are many Pareto-Nash equilibria.
An efficiency \emph{guarantee} $\bm{\rho}\in\REd$ should hold for
every equilibrium outcome. It induces this conjunctive definition of
the set of guaranteed vectorial ratios: 
\index{guaranteed vectorial ratios}
\index{notations!$R[\CE,\CF]$, set of guaranteed vectorial ratios}
\begin{eqnarray*} 
R[\CE,\CF]\quad=\quad\bigcap_{\bm{y}\in\CE} R[\bm{y},\CF]. 
\end{eqnarray*}

In fact, because of the conjunction on equilibria outcomes, the set $R[\CE,\CF]$ only depends on sets $\WST[\CE]$ (instead of set $\CE$) and $\CF$.

Finally, if two bounds on efficiencies $\bm{\rho}$ and $\bm{\rho}'$ are such
that $\bm{\rho}\PP\bm{\rho}'$ (e.g. the former guarantees fraction $\bm{\rho}=(0.75,0.75)$ of efficiency and the later fraction $\bm{\rho}'=(0.5,0.5)$), then $\bm{\rho}'$ brings no more information; hence,
MO-CR is defined using $\EFF$ on the guaranteed efficiency ratios
$R[\WST[\CE],\CF]$. These points are summed up in the following definition:
\begin{definition}[MO-CR]\label{def:MOCR}
Given an MO game, vector $\bm{\rho}\in\REd$
bounds its inefficiency (i.e. $\bm{\rho}\in R[\CE,\CF]$) if and only if the following holds (see Fig. 4) : 
$$\forall \bm{y}\in\CE,\quad\exists \bm{z}\in\CF,\quad \bm{y}/\bm{z}\WPP\bm{\rho}.$$
The multi-objective coordination ratio $\text{MO-CR}[\CE,\CF]$
is then defined as: 
\index{multi-objective coordination ratio}
\index{notations!$\text{MO-CR}[\CE,\CF]$, MO coordination ratio}
\[
\text{MO-CR}[\CE,\CF]\quad=\quad\EFF[R[\WST[\CE],\CF]].
\]
\end{definition}

\begin{figure}[t]
\caption{Didactic depiction of a guaranteed vectorial ratio $\bm{\rho}$ from $\text{MO-CR}[\CE,\CF]$.}
\begin{minipage}{0.44\textwidth}
\includegraphics[scale=0.75]{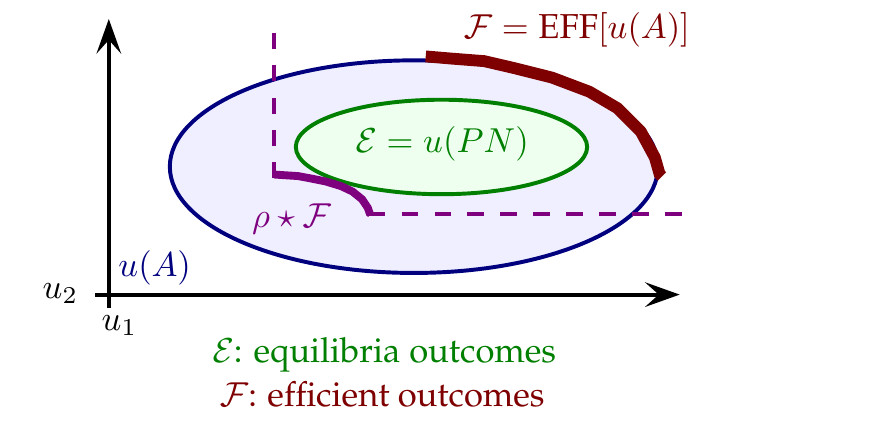}
\end{minipage}
~
\begin{minipage}{0.55\textwidth}
\small
{The multi-objective coordination ratio can be explained by 
the implications of a vectorial ratio $\bm{\rho}\in\text{MO-CR}$: 
for each vector $\bm{y}\in\CE$, an efficient outcome
$\bm{z}^{(\bm{y})}\in\CF$ exists such that $\bm{y}$ Pareto-dominates vector $\bm{\rho}\star \bm{z}^{(\bm{y})}$.
In other words, equilibria outcomes $\CE$ are at least as good as set of vectors
$\bm{\rho}\star\CF$: If $\bm{\rho}\in R[\CE,\CF]$, then every equilibrium
satisfies the ratio of efficiency $\bm{\rho}$ in an unspecified manner.
In other words, the equilibria outcomes are contained in the ``at least as good as
$\bm{\rho}\star\CF$'' cone-union, that is: $\CE\subseteq(\bm{\rho}\star\CF)+\REp.$ Moreover, since $\bm{\rho}$ is tight, set $\CE$ sticks to $\bm{\rho}\star\CF$.}
\end{minipage}
\end{figure}

The most famous results of the coordination ratio (or price of anarchy)
are stated analytically on families of games, for instance on congestion
games \cite{christodoulou2005price,roughgarden2009intrinsic}. Such results
would also be desirable in the multi-objective case. However, the underlying
proofs do not survive this generalization: while best response inequalities
can be summed in single-objective cases, here, non-Pareto-dominances cannot.
This issue is independent of the chosen efficiency measurement and motivates
numerical approaches, as proposed in the next section.

\section{Computation}

In this section, we provide algorithms for computing the set of pure-strategy
Pareto-Nash equilibria and for computing the multi-objective coordination ratio. 

\subsection{Computing pure-strategy Pareto-Nash equilibria}

If the MO game is given in \emph{normal form}\index{normal form}, then it is made of the MO
payoffs of every agent $i\in N$ on every action-profile $\bm{a}\in A$.
Since there are $n\alpha^{n}$ such vectors, where recall that $n$ is the number of agents, $\alpha$ the number of actions per agent and $d$ the number of objectives, the length of this input is $L(n)=n\alpha^{n}d$. Then, enumeration of the action-profiles works
efficiently 
with respect to length function $L$, using a simple argument  similar to \cite{gottlob2005pure}.
\index{length function}
\index{notations!$L$, length function}

\begin{theorem} Given a MO game in normal form, computing the set
of the best (resp. worst) equilibria outcomes $\EFF[\CE]$ (resp.
$\WST[\CE]$) takes polynomial time  
\[
O(n\alpha^{n+1}d+\alpha^{2n}d)\quad=\quad O(L^{2}).
\]
 Moreover, if $d=2$, this complexity is lowered to quasi-linear-time
$$O(n\alpha^{n}\log_{2}(\alpha))=O(L\log_{2}(\alpha)).$$ 
\end{theorem}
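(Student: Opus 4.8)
The plan is to charge the running time against the normal-form input length $L=n\alpha^{n}d$, which is itself exponential in $n$; measured against this yardstick, a brute-force enumeration over the $\alpha^{n}$ action-profiles is already polynomial, in the spirit of the counting argument in \cite{gottlob2005pure}. I would organize the computation into three stages: (i) precompute, for every agent and every adversary profile, which actions are Pareto-efficient responses; (ii) sweep all action-profiles and collect the equilibrium outcomes $\CE$; and (iii) extract $\EFF[\CE]$ and $\WST[\CE]$ from $\CE$.

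For stage (i), for each of the $n\alpha^{n-1}$ pairs $(i,\bm{a}^{-i})$ I would compute $\EFF[\{\bm{u}^{i}(b^{i},\bm{a}^{-i})\mid b^{i}\in A^{i}\}]$ by pairwise comparison of the $\alpha$ candidate vectors, each Pareto-dominance test costing $O(d)$; this is $O(\alpha^{2}d)$ per pair, hence $O(n\alpha^{n+1}d)$ in total, and I would store the outcome as a boolean flag on each response. For stage (ii), I would enumerate the $\alpha^{n}$ profiles: by the definition of Pareto-Nash equilibrium, $\bm{a}\in\PE$ iff every agent's action $a^{i}$ is flagged efficient against $\bm{a}^{-i}$, which the stored table resolves in $O(n)$; for each equilibrium I accumulate its utilitarian outcome $\bm{u}(\bm{a})=\sum_{i\in N}\bm{u}^{i}(\bm{a})$ in $O(nd)$. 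This stage costs $O(n\alpha^{n}d)$ and is dominated by the others, and it yields the set $\CE$ of at most $\alpha^{n}$ vectors.

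For stage (iii), since $|\CE|\le\alpha^{n}$, I would obtain $\EFF[\CE]$ (resp.\ $\WST[\CE]$, symmetrically by reversing $\PP$) by pairwise comparison in $O(\alpha^{2n}d)$. Summing the three stages gives $O(n\alpha^{n+1}d+\alpha^{2n}d)$. To match $O(L^{2})=O(n^{2}\alpha^{2n}d^{2})$ I would note that each summand is at most $L^{2}$: the first equals $L\alpha\le L^{2}$ because $\alpha\le L$, and the second satisfies $\alpha^{2n}d\le n^{2}\alpha^{2n}d^{2}=L^{2}$.

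For $d=2$ the only change is to replace the two quadratic $\EFF$/$\WST$ computations by the classical planar-maxima sweep: sorting $m$ vectors by their first coordinate and scanning once returns the Pareto frontier in $O(m\log m)$. In stage (i), with $m=\alpha$ repeated $n\alpha^{n-1}$ times, this costs $O(n\alpha^{n}\log_{2}\alpha)$; in stage (iii), with $m\le\alpha^{n}$, it costs $O(\alpha^{n}\log_{2}(\alpha^{n}))=O(n\alpha^{n}\log_{2}\alpha)$; and stage (ii) is $O(n\alpha^{n})$ since $d=2$ is constant, for a total of $O(n\alpha^{n}\log_{2}\alpha)=O(L\log_{2}\alpha)$. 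The main obstacle is conceptual rather than computational: one must recognize that the exponential number of profiles is already absorbed by the exponential size of the normal-form input, so the naive enumeration is genuinely polynomial in $L$. The remaining care is the amortized bookkeeping that keeps the per-profile equilibrium test at $O(n)$ via the precomputed efficiency flags, and, for $d=2$, invoking the two-dimensional maxima algorithm and checking that its logarithmic factor is exactly $\log_{2}\alpha$.
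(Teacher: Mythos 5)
Your proposal is correct and follows essentially the same route as the paper's proof: the same three-stage decomposition (precompute Pareto-efficient responses per agent and adversary profile in $O(n\alpha^{n+1}d)$, enumerate profiles to build $\CE$, then extract $\EFF[\CE]$ or $\WST[\CE]$ by pairwise comparison in $O(\alpha^{2n}d)$), and the same $d=2$ speed-up via a sort-and-sweep planar-maxima computation (the paper phrases it with an AVL tree ordered on the first objective, which is the same idea). Your explicit justification that each summand is $O(L^{2})$ is a small welcome addition over the paper's statement.
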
\medskip{}

\quad {Graphical games} provide compact representations
of massive multi-agent games when the payoff functions of the agents
only depend on a local subset of the agents \cite{kearns2001graphical}. 
Graphical games can be generalized in a straightforward manner 
to assuming vectorial payoffs. Formally, there is a support graph $G=(N,E)$ where each vertex represents an agent, and an agent $i$'s evaluation function only depends on the actions of the agents in his inner-neighbourhood $\NBH(i)=\{j\in N|(j,i)\in E\}$. That is $\bm{u}^i:A^{\NBH(i)}\rightarrow\RE^d$ maps each local action-profile $\bm{a}^{\NBH(i)}\in A^{\NBH(i)}$ to a multi-objective payoff $\bm{u}^i(\bm{a}^{\NBH(i)})\in\RE^d$.\index{notations!$\NBH(i)$, neighborhood of vertex $i$}
\begin{definition}[Multi-objective graphical game (MOGG)]\index{multi-objective graphical game (MOGG)}
An MOGG is a tuple $\left(G=(N,E), \{A^i\}_{i\in N}, \OBJ, \{\bm{u}^i\}_{i\in N}\right)$. $N$ is the set of agents. $\{A^i\}_{i\in N}$ are their individual action-sets. $\OBJ$ is the set of all objectives. Every function $\bm{u}^i:A^{\NBH(i)}\rightarrow\RE^d$ is vector-valued, and its scope is vertex $i$'s neighborhood. 
\end{definition}
Figure 1 pictures a didactic instance of an MOGG.
In the same manner as
computing equilibria in graphical games was reduced to junction-tree
algorithms \cite{daskalakis2006computing}, it is also possible to
exploit a generalized MO junction-tree algorithm
\cite{dubus2009multiobjective,gonzales2011decision}. However, even
though this MO junction-tree algorithm is not in polynomial
time (but rather pseudo-polynomial time), it still remains faster than browsing the Cartesian product of action-sets and is tractable on average,
as experimented in the appendix.
Symmetric games \cite{jiang2007computing} can also be generalized to MOGs:
\begin{definition}
In a \emph{multi-objective symmetric game}\index{multi-objective symmetric game}, 
individual payoffs are not impacted by
the agents' identities. There is one sole action-set $A^{\ast}$ for
every agent $i$. So, when deciding action $a^{\ast}\in A^{\ast}$, the
multi-objective reward only depends on the number of agents that decided every action.
Consequently, the game is not specified for every action-profile $\bm{a}\in A=\prod_{i\in N}A^{\ast}$
and every agent $i$, but rather for every action $a^{\ast}\in A^{\ast}$
and every \emph{configuration} $c:A^{\ast}\rightarrow\mathbb{N}$,
where number $c(a^{\ast})\in\mathbb{N}$ indicates the number of agents
deciding action $a^{\ast}$. Therefore, the utility
is given by a function $\bm{u}^{\ast}$ such that 
$\bm{u}^{\ast}(a^{\ast},c)\in\RE^{d}$
is the payoff for deciding action $a^{\ast}$ when
configuration $c$ occurs. 
\end{definition}
There is a number ${n+\alpha-1 \choose \alpha-1}$
of configurations\footnote{To enumerate the number of ways to distribute number $n$ of symmetric
agents into $\alpha$ parts, one enumerates the ways to choose $\alpha-1$
``separators'' in $n+\alpha-1$ elements.} to which the MO symmetric game associates MO vectors. As a consequence, generalizing to vectorial payoffs,
the representation length is $L=\alpha{n+\alpha-1 \choose \alpha-1}d$,
and when the numbers $\alpha$ and $d$ are fixed constant, length is
$L(n)\in\Theta\left(\alpha n^{\alpha}d\right)$. Quite simply, 
for computing $\CE$, $\EFF[\CE]$ and $\WST[\CE]$, configurations enumeration 
already takes polynomial time. 
\begin{theorem}
Given a multi-objective symmetric game with fixed $\alpha$,
\begin{itemize}
\item computing $\PE$ and $\CE$ takes time $O(n^{\alpha}\alpha^{2}d)=O(L\alpha)$; 
\item computing $\EFF[\CE]$ and $\WST[\CE]$ takes time $O(n^{2\alpha}d)=O(L^{2})$.
If $d=2$, this lowers to $O(L(\alpha+\log_{}(L)))$.
\end{itemize}
\end{theorem}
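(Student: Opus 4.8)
The plan is to exploit the symmetry so that neither equilibrium-testing nor outcome-collection ever touches the $\alpha^{n}$ action-profiles individually, but only the $M=\binom{n+\alpha-1}{\alpha-1}$ configurations, recalling that the input length is $L=\alpha M d$. The first observation is that, because payoffs depend only on the configuration, being a Pareto-Nash equilibrium is itself a property of the configuration: a profile $\bm{a}$ inducing configuration $c$ lies in $\PE$ if and only if, for every action $a^\ast$ actually played (i.e.\ $c(a^\ast)\ge 1$), the payoff $\bm{u}^\ast(a^\ast,c)$ is Pareto-efficient among the payoffs that such an agent could obtain by a unilateral deviation. Since a single agent moving from $a^\ast$ to $b^\ast$ turns $c$ into the configuration $c'$ with $c'(a^\ast)=c(a^\ast)-1$ and $c'(b^\ast)=c(b^\ast)+1$, the deviation payoff is simply the stored entry $\bm{u}^\ast(b^\ast,c')$.

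So I would first enumerate the $M$ configurations in a fixed order (say colexicographic), precomputing for each the rank that indexes its payoff vectors; this lets me locate any neighbouring configuration $c'$, in the fixed-$\alpha$ regime, with constant bookkeeping. For each configuration the equilibrium test ranges over the at most $\alpha$ played actions $a^\ast$ and, for each, over the $\alpha$ deviations $b^\ast$, performing one Pareto-dominance comparison of $\bm{u}^\ast(b^\ast,c')$ against $\bm{u}^\ast(a^\ast,c)$ in $O(d)$; this is $O(\alpha^2 d)$ per configuration. Collecting $\CE=\bm{u}(\PE)$ costs no more asymptotically, because in a symmetric game the utilitarian welfare of a profile with configuration $c$ is the count-weighted sum $\bm{u}(\bm{a})=\sum_{a^\ast\in A^\ast}c(a^\ast)\,\bm{u}^\ast(a^\ast,c)$, evaluable in $O(\alpha d)$. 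Summing over all configurations yields $O(M\alpha^2 d)$; since $L=\alpha M d$ this is $O(L\alpha)$, and bounding $M=O(n^\alpha)$ gives the stated $O(n^\alpha\alpha^2 d)$.

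For the second bullet I would treat $\CE$ as an explicit list of at most $M$ vectors of $\REd$ and extract $\EFF[\CE]$ and, symmetrically, $\WST[\CE]$ by filtering. The generic method is the naive pairwise sieve: test each of the $O(M^2)$ ordered pairs for Pareto-dominance at $O(d)$ cost, i.e.\ $O(M^2 d)=O(n^{2\alpha}d)$, and since $L^2=\alpha^2 M^2 d^2$ this is $O(L^2)$. When $d=2$ the frontier admits the classical sort-and-sweep: sort the points by the first coordinate and scan once while maintaining the running best second coordinate, so the efficient (resp.\ worst) points are read off in a single pass; with $O(1)$ comparisons this is $O(M\log M)$. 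Combined with the $O(L\alpha)$ already spent to produce $\CE$, and using $M=\Theta(L/\alpha)$ for fixed $d=2$, the total is $O(L\alpha+L\log L)=O(L(\alpha+\log L))$.

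The step I expect to be the genuine obstacle is the \emph{correctness} of the per-configuration equilibrium test rather than its timing: one must argue that testing a single representative of each configuration suffices (two agents playing the same action see the same reduced environment $c$ with one of their own tokens removed, hence the same Pareto-efficient-response condition), and that scanning only single-agent deviations is \emph{exactly} the Pareto-efficient-response requirement, with the deviation payoff correctly identified as the table entry of the neighbouring configuration $c'$. The remaining care is purely bookkeeping: checking that the configuration-to-index ranking keeps neighbour lookups cheap in the fixed-$\alpha$ regime, that duplicated or coincident outcomes in $\CE$ are harmless because $\EFF$ and $\WST$ act on the underlying set, and that the counting identities $L=\alpha M d$ and $M=O(n^\alpha)$ convert each running time into the advertised $O(L\alpha)$, $O(L^2)$, and $O(L(\alpha+\log L))$ forms.
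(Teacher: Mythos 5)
Your proposal is correct and follows essentially the same route as the paper's proof: enumerate the $\binom{n+\alpha-1}{\alpha-1}$ configurations, test each in $O(\alpha^2 d)$ via the neighbouring configuration $c'$ obtained by moving one agent from $a^\ast$ to $b^\ast$, compute the utilitarian outcome as the count-weighted sum $\sum_{a^\ast}c(a^\ast)\bm{u}^\ast(a^\ast,c)$, and then filter $\EFF[\CE]$ and $\WST[\CE]$ by the quadratic pairwise sieve (or sort-and-sweep when $d=2$). The counting conversions to $O(L\alpha)$, $O(L^2)$ and $O(L(\alpha+\log L))$ are the same as in the paper, so nothing further is needed.
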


\subsection{Computing MO-CR}

In this subsection, we address the problem of computing the set $\text{MO-CR}[\CE,\CF]$,
given sets of worst equilibria outcomes $\WST[\CE]$ and efficient
outcomes $\CF$. Algorithm 1 (below) computes such set. In the algorithm, set $D^{t}$ denotes
a set of vectors. Given two vectors, $\bm{x},\bm{y}\in\REp$, 
let $\bm{x}\wedge \bm{y}$
denote the vector defined by $\forall k\in\OBJ,~(\bm{x}\wedge \bm{y})_{k}=\min\{x_{k},y_{k}\}$,
let $\bm{x}^{\bm{y}}\in\REp$ be the vector defined by $\forall k\in\OBJ, (\bm{x}^{\bm{y}})_k=(x_k)^{y_k}$,
and recall that $\forall k\in\OBJ,~(\bm{x}/\bm{y})_{k}=x_{k}/y_{k}$. 
\index{notations!$\bm{x}\wedge \bm{y}$, vectorial minimum of two vectors}
\index{notations!$\bm{x}^{\bm{y}}$, vectorial exponential of two vectors}

\begin{algorithm}[!h] 
\KwIn{$\WST[\CE]=\{\bm{y}^1,\ldots,\bm{y}^q\}$ and $\CF=\{\bm{z}^1,\ldots,\bm{z}^m\}$} \KwOut{$\text{MO-CR}=\EFF[R[\WST[\CE],\CF]]$} \ \\ [-1.5ex] 
{\bf create } $D^1\leftarrow \{\bm{y}^1/\bm{z}\in\REp~|~\bm{z}\in\CF\}$\\ 
\For{$t=2,\ldots,q$}{ 
	$D^t\leftarrow\EFF[\{\bm{\rho} ~\wedge~ (\bm{y}^t/\bm{z})~~|~~\bm{\rho}\in 		D^{t-1},~~\bm{z}\in\CF\}]$ 
} 
{\bf return } $D^q$  
\caption{Computing MO-CR in polynomial-time} 
\end{algorithm}

\begin{theorem}
Algorithm 1 outputs $\text{MO-CR}[\CE,\CF]$ in poly-time $O((qm)^{2d-1}d),$ where $q=|\WST[\CE]|$ and $m=|\CF|$ denote
the size of the inputs, and $d$ is fixed. 
\end{theorem}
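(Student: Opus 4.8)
The plan is to separate the argument into a correctness part, which rests on rewriting $R[\WST[\CE],\CF]$ as a union of downward cones and then tracking that union through the loop, and a running-time part, which rests on a counting bound for Pareto antichains confined to a grid of bounded resolution. Throughout I write $\WST[\CE]=\{\bm{y}^1,\ldots,\bm{y}^q\}$ and $\CF=\{\bm{z}^1,\ldots,\bm{z}^m\}$, and I recall that $R[\WST[\CE],\CF]=\bigcap_{t=1}^q\bigcup_{\bm{z}\in\CF}\CC(\bm{y}^t/\bm{z})$ and $\text{MO-CR}[\CE,\CF]=\EFF[R[\WST[\CE],\CF]]$.

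First I would rewrite the target set using two elementary facts. Distributivity of intersection over union pulls the union outside and indexes it by choice tuples $(\bm{z}_1,\ldots,\bm{z}_q)\in\CF^q$; and because $\CC(\bm{x})=\{\bm{y}\mid\bm{x}\WPP\bm{y}\}$ is a downward cone, a vector lies in every $\CC(\bm{a}_t)$ iff it lies below their componentwise minimum, i.e. $\bigcap_t\CC(\bm{a}_t)=\CC(\bigwedge_t\bm{a}_t)$. Combining these gives $R[\WST[\CE],\CF]=\bigcup_{(\bm{z}_t)\in\CF^q}\CC(\bm{\rho}_{(\bm{z}_t)})$ with apex $\bm{\rho}_{(\bm{z}_t)}=\bigwedge_{t=1}^q(\bm{y}^t/\bm{z}_t)$. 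The Pareto-maximal elements of a union of downward cones are exactly its nondominated apexes (any point lies below some apex, so a maximal point must itself be a nondominated apex), whence $\text{MO-CR}[\CE,\CF]=\EFF[P]$, where $P=\{\bm{\rho}_{(\bm{z}_t)}\mid(\bm{z}_t)\in\CF^q\}$.

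Next I would prove by induction on $t$ that the algorithm maintains the invariant $D^t=\EFF[P_t]$, where $P_t$ is the analogous apex set built from $\bm{y}^1,\ldots,\bm{y}^t$ only. The base case $D^1=\{\bm{y}^1/\bm{z}\mid\bm{z}\in\CF\}$ matches $P_1$ directly. The induction step is the crux, and the main obstacle: I must show that pruning to the efficient apexes at each stage loses nothing. The key lemma is monotonicity of the meet under Pareto-dominance: if $\bm{\rho}\WPP\bm{\rho}'$ then $\bm{\rho}\wedge\bm{x}\WPP\bm{\rho}'\wedge\bm{x}$ for every $\bm{x}$. Hence any apex of $P_t$ obtained by extending a dominated apex of $P_{t-1}$ is itself dominated by the apex obtained from the dominating one, so $\EFF[P_t]=\EFF[\{\bm{\rho}\wedge(\bm{y}^t/\bm{z})\mid\bm{\rho}\in\EFF[P_{t-1}],\bm{z}\in\CF\}]$, which is exactly the recurrence producing $D^t$ from $D^{t-1}$. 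Therefore $D^q=\EFF[P_q]=\EFF[P]=\text{MO-CR}[\CE,\CF]$.

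For the running time the crux is bounding $|D^t|$. Each apex coordinate equals $y^s_k/z^j_k$ for the minimizing pair, hence lies in a set of at most $qm$ distinct ratios, so the apexes sit in a $d$-dimensional grid with at most $qm$ values per axis; since a Pareto antichain in such a grid has at most $(qm)^{d-1}$ elements, $|D^t|\le(qm)^{d-1}$ throughout. Iteration $t$ then forms at most $|D^{t-1}|\cdot m\le(qm)^{d-1}m$ candidate meets (each in $O(d)$ time) and extracts $\EFF$ by inserting them against a running maximal set of size $\le(qm)^{d-1}$, at cost $O\big((qm)^{d-1}m\cdot(qm)^{d-1}\,d\big)=O\big((qm)^{2d-2}m\,d\big)$. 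Summing over the $q$ iterations yields $O((qm)^{2d-1}d)$, as claimed. I would flag the two nonroutine ingredients as the lossless-pruning lemma (meet monotonicity) for correctness and the antichain-in-a-grid bound for the size control; everything else is bookkeeping.
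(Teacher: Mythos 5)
Your proposal is correct and follows essentially the same route as the paper: the same expansion of $R[\WST[\CE],\CF]$ into a union of cones indexed by choice functions $\CF^q$ with apexes $\bigwedge_t \bm{y}^t/\bm{z}_t$, the same justification that pruning to efficient apexes at each iteration is lossless (your meet-monotonicity lemma is the element-level restatement of the paper's identity $\CC(X)=\CC(\EFF[X])$ within its cone-union algebra), and the same antichain-in-a-grid bound $|D^t|\le(qm)^{d-1}$ driving the $O((qm)^{2d-1}d)$ total.
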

\begin{proof}Algorithm
1 calculates product $\cap_{\bm{y}\in\WST[\CE]}\cup_{\bm{z}\in\CF}\CC(\bm{y}/\bm{z})$, where there could be $m^q$ terms in the output. This set-algebra of cone-unions is compact.
\end{proof}
A decisive corollary is that given
an MO game with length $L$ that satisfies $q=O(\text{poly}(L))$,
$m=O(\text{poly}(L))$ and both sets $\WST[\CE]$ and $\CF$ are computable
in time $O(\text{poly}(L))$, then one can compute $\text{MO-CR}$
in polynomial time $O(\text{poly}(L))$. For instance, it is the case
with MO normal forms or MO symmetric games. So this approach is not intractable in the most basic cases.

\subsection{Approximation of the MO-CR for MO compact representations}

Unfortunately, Algorithm 1 is not practical when the MO game has a compact form and cardinalities $q,m$ are exponentials with respect to the compact size of the game's representation. For instance, this is the case for multi-objective graphical games. Theorem \ref{th:approx} below answers this issue by taking only a small and approximate representation of sets $\WST[\CE]$ and $\CF$, in order to output a guaranteed approximation of sets $\text{MO-CR}$ or $R[\WST[\CE],\CF]$. This suggests the following general method:
\begin{enumerate}
\item Given a compact MOG representation, compute quickly an approximation  $E^{(\varepsilon)}$ of $\WST[\CE]$ and an approximation  $F^{(\varepsilon')}$ of $\CF$.
\item Then, given $E^{(\varepsilon)}$ and $F^{(\varepsilon')}$, use Algorithm 1 to approximate the MO-CR.
\end{enumerate}

For this general method to be implemented rigorously, we must specify the precise definitions of the two approximations required in input, for the desired output to be indeed some approximation of the MO-CR.


 Firstly, let us specify the {output}. 
The ratios in $R[\WST[\CE],\CF]$ must be represented, even approximately, but only by using valid ratios of efficiency, as below.\index{covering}
\begin{definition}[$(1+\varepsilon)$-covering]
\label{def:covering}
Given $R\subset\REp$ and $\varepsilon>0$,~~~ 
$R^{(\varepsilon)}\subset R$ is a $(1+\varepsilon)$-covering of $R$, if and only if:
$$
\forall \bm{\rho}\in R,\quad
\exists \bm{\rho}'\in R^{(\varepsilon)}:\quad
(1+\varepsilon)\bm{\rho}' \WPP \bm{\rho}
$$
\end{definition}
For instance, $R[\WST[\CE],\CF]$ is $(1+0)$-covered by $\text{MO-CR}=\EFF[R[\WST[\CE],\CF]]$.
Denote $\bm{\varphi}:\REp\rightarrow\NEp$ the discretization into the $(1+\varepsilon)$-logarithmic grid. Given a vector $\bm{x}\in\REp$, $\bm{\varphi}(x)$ is defined by: $\forall k\in\OBJ,~~\varphi_k(x)=\lfloor\log_{(1+\varepsilon)}(x_k)\rfloor$. 
A typical implementation of $(1+\varepsilon)$-coverings are the logarithmic $(1+\varepsilon)$-coverings, which consist in taking one vector of $R$ in each reciprocal image of $\bm{\varphi}(R)$. That is, for each $\bm{l}\in\bm{\varphi}(R)$, take one $\bm{\rho}$ in $\bm{\varphi}^{-1}(\bm{l})$.
The logarithmic grid is depicted in Fig. \ref{fig:mo:approx}.

Now we must specify rigorously what approximate representations $E^{(\varepsilon_1)}$ of set $\WST[\CE]$, and $F^{(\varepsilon_2)}$  of set $\CF$ we should take in input,  in order to guarantee that $R[E^{(\varepsilon_1)},F^{(\varepsilon_2)}]$ is an $(1+\varepsilon)$-covering of $R[\WST[\CE],\CF]$.
%
%
Definitions \ref{def:under:covering} and \ref{def:stick:covering} come from the need of specific approximate representations that will carry the guarantees to the approximate final output $R[E^{(\varepsilon_1)},F^{(\varepsilon_2)}]$.\index{under-covering}

\begin{definition}[$(1+\varepsilon)$-under-covering]
\label{def:under:covering}
Given $\varepsilon>0$, $E\subset\REp$ and $E^{(\varepsilon)}\subset\REp$,
$E^{(\varepsilon)}$ $(1+\varepsilon)$-{under}-covers $E$ if and only if:
\begin{eqnarray*}
\forall \bm{y}\in E,~~
\exists \bm{y}'\in E^{(\varepsilon)} &:&
\bm{y}\WPP \bm{y}'\\
\text{and}~~~
\forall \bm{y}'\in E^{(\varepsilon)},~~
\exists \bm{y}\in E &:&
(1+\varepsilon)\bm{y}'\WPP \bm{y}
\end{eqnarray*}
\end{definition}

The first condition states that $E^{(\varepsilon)}$ bounds $E$ from below.
The second condition states that this lower bound is precise within a multiplicative $(1+\varepsilon)$. Given $E$, one can implement Definition \ref{def:under:covering} by using the log-grid (see e.g. Fig. \ref{fig:mo:approx}):
$$
E^{(\varepsilon)}\leftarrow\WST\left[~~\left\{~\bm{e}^{\bm{l}}\in\REp\mid \bm{l}\in\bm{\varphi}\left(\WST[\CE]\right)\right\}~~\right]
$$
where $\bm{\varphi}(\WST[\CE])=\{\bm{\varphi}(\bm{y})\in\NEp\mid\bm{y}\in \WST[\CE]\}$, and given $\bm{l}\in\NEp$, the vector $\bm{e}^{\bm{l}}$ is defined by $(\bm{e}^{\bm{l}})_k=(1+\varepsilon)^{l_k}$.
Now let us state what approximation is required on the set of efficient outcomes $\CF$.\index{stick-covering}
\begin{definition}[$(1+\varepsilon)$-stick-covering]
\label{def:stick:covering}
Given $\varepsilon>0$, $F\subset\REp$ and $F^{(\varepsilon)}\subset\REp$, 
$F^{(\varepsilon)}$ $(1+\varepsilon)$-{stick}-covers $F$ if and only if:
\begin{eqnarray*}
\forall \bm{z}'\in F^{(\varepsilon)},~~
\exists \bm{z}\in F &:&
\bm{z}'\WPP \bm{z}\\
\text{and}~~~
\forall \bm{z}\in F,~~
\exists \bm{z}'\in F^{(\varepsilon)} &:&
(1+\varepsilon)\bm{z}\WPP \bm{z}'
\end{eqnarray*}
\end{definition}
The first condition is easily satisfiable by $F^{(\varepsilon)}\subseteq F$.
The second condition states that $F^{(\varepsilon)}$ sticks to $F$. Given $F$, one can implement Definition \ref{def:stick:covering} as in Figure \ref{fig:mo:approx}: Take one element of $\CF$ per cell of the logarithmic grid, and then take $\WST$ of this set of elements.
Now we can state that with an approximate Phase 1, the precision transfers to Phase 2 in polynomial time, as follows.

\begin{figure}[t]
\caption{MO approximations, depictions of under and stick coverings}
\label{fig:mo:approx}
\begin{minipage}{0.5\textwidth}
\centering
\includegraphics[scale=0.5]{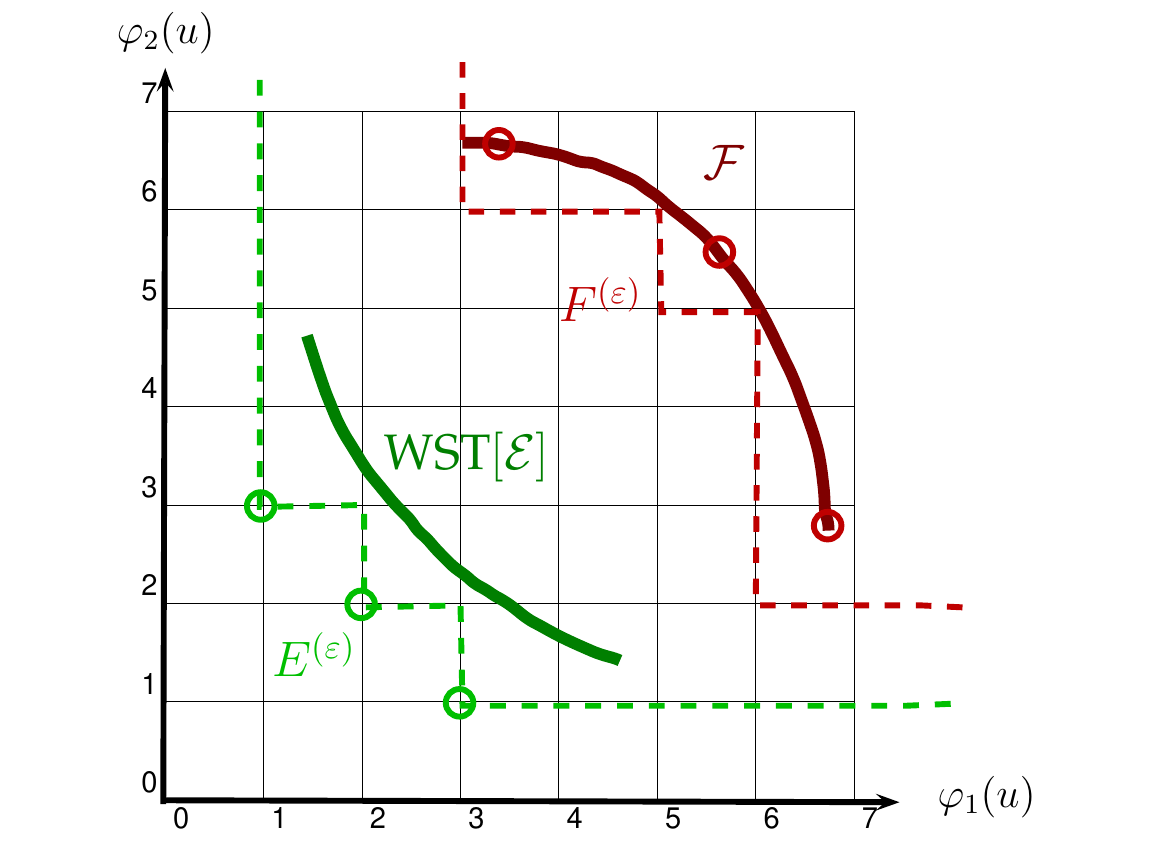}
\end{minipage}
~
\begin{minipage}{0.48\textwidth}
\small
$E^{(\varepsilon)}$ (the green dots below $\WST[\CE]$) is a $(1+\varepsilon)$-under-covering of set $\WST[\CE]$.\\[3ex]
$F^{(\varepsilon)}$ (the three red dots in $\CF$) is a $(1+\varepsilon)$-stick-covering of the dark-red set $\CF$.
\end{minipage}

\end{figure}

\begin{lemma}\label{lem:approx} Given  $\varepsilon_1,\varepsilon_2>0$ and approximations $E$ of $\CE$ and $F$ of $\CF$, if
\begin{eqnarray}
\forall \bm{y}\in\CE, \exists \bm{y}'\in E,\quad \bm{y}\WPP \bm{y}' &
\quad \text{and}\quad & 
\forall \bm{y}'\in E, \exists \bm{y}\in\CE,\quad (1+\varepsilon_1)\bm{y}'\WPP \bm{y}\label{eq:approx:E}\\
\forall \bm{z}'\in F, \exists \bm{z}\in\CF,\quad \bm{z}'\WPP \bm{z} &
\quad \text{and}\quad & 
\forall \bm{z}\in\CF, \exists \bm{z}'\in F,\quad (1+\varepsilon_2)\bm{z}\WPP \bm{z}'\label{eq:approx:F}
\end{eqnarray}
holds, then it follows that $R[E,F]\subseteq R[\CE,\CF]$ and:
\begin{eqnarray}
\forall \bm{\rho}\in R[\CE,\CF],\quad \exists \bm{\rho}'\in R[E,F],\quad (1+\varepsilon_1)(1+\varepsilon_2)\bm{\rho}'\WPP\bm{\rho}\label{eq:approx:R}
\end{eqnarray}
\end{lemma}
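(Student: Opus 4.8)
The plan is to read the two conclusions of the lemma as the two halves of a single statement: together, $R[E,F]\subseteq R[\CE,\CF]$ and (\ref{eq:approx:R}) say precisely that $R[E,F]$ is a $(1+\varepsilon_1)(1+\varepsilon_2)$-covering of $R[\CE,\CF]$ in the sense of Definition \ref{def:covering}. Both halves reduce to unwinding the characterization $\bm{\rho}\in R[\CE,\CF]\Leftrightarrow(\forall\bm{y}\in\CE,\,\exists\bm{z}\in\CF,\,\bm{y}/\bm{z}\WPP\bm{\rho})$ and chaining the four one-sided conditions in (\ref{eq:approx:E}) and (\ref{eq:approx:F}). The only tool needed is the coordinatewise monotonicity of division on $\REp$: increasing a numerator or decreasing a denominator increases the quotient.

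For the containment, I would fix $\bm{\rho}\in R[E,F]$ and an arbitrary $\bm{y}\in\CE$, then produce a witness $\bm{z}\in\CF$ with $\bm{y}/\bm{z}\WPP\bm{\rho}$. The lower-bound half of (\ref{eq:approx:E}) gives $\bm{y}'\in E$ with $\bm{y}\WPP\bm{y}'$; membership $\bm{\rho}\in R[E,F]$ applied to this $\bm{y}'$ gives $\bm{z}'\in F$ with $\bm{y}'/\bm{z}'\WPP\bm{\rho}$; and the lower-bound half of (\ref{eq:approx:F}) gives $\bm{z}\in\CF$ with $\bm{z}'\WPP\bm{z}$. Coordinatewise, $y_k\ge y'_k$ and $z_k\le z'_k$ then yield $y_k/z_k\ge y'_k/z_k\ge y'_k/z'_k\ge\rho_k$, so $\bm{y}/\bm{z}\WPP\bm{\rho}$, as required. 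Since $\bm{y}$ was arbitrary, $\bm{\rho}\in R[\CE,\CF]$.

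For the covering inequality I would set $\bm{\rho}'=\bm{\rho}/\big((1+\varepsilon_1)(1+\varepsilon_2)\big)$, so that $(1+\varepsilon_1)(1+\varepsilon_2)\bm{\rho}'\WPP\bm{\rho}$ holds with equality and only $\bm{\rho}'\in R[E,F]$ remains to be shown. Fix an arbitrary $\bm{y}'\in E$. The precision half of (\ref{eq:approx:E}) gives $\bm{y}\in\CE$ with $(1+\varepsilon_1)\bm{y}'\WPP\bm{y}$; membership $\bm{\rho}\in R[\CE,\CF]$ applied to this $\bm{y}$ gives $\bm{z}\in\CF$ with $\bm{y}/\bm{z}\WPP\bm{\rho}$; and the stick half of (\ref{eq:approx:F}) gives $\bm{z}'\in F$ with $(1+\varepsilon_2)\bm{z}\WPP\bm{z}'$. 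Chaining $y'_k\ge y_k/(1+\varepsilon_1)$ and $z'_k\le(1+\varepsilon_2)z_k$ gives $y'_k/z'_k\ge (y_k/z_k)/\big((1+\varepsilon_1)(1+\varepsilon_2)\big)\ge\rho'_k$, i.e. $\bm{y}'/\bm{z}'\WPP\bm{\rho}'$. As $\bm{y}'$ ranged over all of $E$ (each with its own witness $\bm{z}'\in F$), this is exactly $\bm{\rho}'\in R[E,F]$.

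The main point requiring care is the direction of the inequalities under division: an \emph{upper} bound on the denominator $\bm{z}$ becomes a \emph{lower} bound on the quotient, which is why each approximation must be two-sided (a below-bound paired with a precision or stick bound) and why the two error factors $(1+\varepsilon_1)$ and $(1+\varepsilon_2)$ combine multiplicatively rather than additively. Positivity of all vectors, assumed throughout Section 5, is what legitimizes both the coordinatewise division and the scalar rescaling defining $\bm{\rho}'$. Beyond this bookkeeping the argument is a routine coordinatewise chase; note that the $\forall\bm{y}\,\exists\bm{z}$ alternation matches the $\bigcap_{\bm{y}}\bigcup_{\bm{z}}$ structure of $R[\cdot,\cdot]$ exactly, so no uniform choice of witnesses is needed.
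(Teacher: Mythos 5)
Your proof is correct and follows essentially the same route as the paper's: both parts proceed by the same three-step chaining of witnesses (lower-bound half of (\ref{eq:approx:E}), membership in $R[\cdot,\cdot]$, then the matching half of (\ref{eq:approx:F})), with the same choice $\bm{\rho}'=\bm{\rho}/\bigl((1+\varepsilon_1)(1+\varepsilon_2)\bigr)$ in the second part. The only cosmetic difference is that the paper phrases the argument over $\WST[\CE]$ rather than $\CE$ (which it has already noted makes no difference) and writes the dominances multiplicatively as $\bm{y}\WPP\bm{\rho}\star\bm{z}$ instead of $\bm{y}/\bm{z}\WPP\bm{\rho}$.
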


Equations (\ref{eq:approx:E}) and  (\ref{eq:approx:F}) state approximation bounds as in Definitions \ref{def:under:covering} and \ref{def:stick:covering}. Equations (\ref{eq:approx:E}) state that $(1+\varepsilon_1)^{-1}\CE$ bounds below $E$ which bounds below $\CE$. Equations (\ref{eq:approx:F}) state that $\CF$ bounds below $F$ which bounds below $(1+\varepsilon_2)\CF$. Crucially, whatever the sizes of $\CE$ and $\CF$, there exist such approximations $E$ and $F$ with respective sizes $O((1/\varepsilon_1)^{d-1})$ and $O((1/\varepsilon_2)^{d-1})$  \cite{papadimitriou2000approximability}, yielding the approximation scheme below.

\begin{theorem}[Approximation Scheme for MO-CR]\label{th:approx}
Given a compact MOG of representation length $L$, precisions $\varepsilon_1,\varepsilon_2>0$  and two algorithms  to compute approximations $E$ of $\CE$ and $F$ of $\CF$ in the sense of Equations (\ref{eq:approx:E}) and (\ref{eq:approx:F}) that take time $\theta_{\CE}(\varepsilon_1,L)$ and $\theta_{\CF}(\varepsilon_2,L)$, one can approximate $R[\CE,\CF]$ in the sense of Equation (\ref{eq:approx:R}) in time
$O\left(\theta_{\CE}(\varepsilon_1,L) \quad+\quad \theta_{\CF}(\varepsilon_2,L) \quad+\quad {(\varepsilon_1 \varepsilon_2)^{-(d-1)(2d-1)}}\right)$.
\end{theorem}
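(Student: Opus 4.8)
The plan is to read Theorem~\ref{th:approx} as a composition of three ingredients that are already in place: the two supplied Phase-1 approximation algorithms, the correctness transfer of Lemma~\ref{lem:approx}, and the running-time bound of Algorithm 1. First I would run the two given algorithms to produce, in times $\theta_{\CE}(\varepsilon_1,L)$ and $\theta_{\CF}(\varepsilon_2,L)$, a set $E$ that $(1+\varepsilon_1)$-under-covers $\WST[\CE]$ (Definition~\ref{def:under:covering}, i.e. Equation~(\ref{eq:approx:E})) and a set $F$ that $(1+\varepsilon_2)$-stick-covers $\CF$ (Definition~\ref{def:stick:covering}, i.e. Equation~(\ref{eq:approx:F})). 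By the Papadimitriou--Yannakakis bound recalled just before the theorem \cite{papadimitriou2000approximability}, these fronts can be taken of the minimal polynomial size $|E|=O((1/\varepsilon_1)^{d-1})$ and $|F|=O((1/\varepsilon_2)^{d-1})$ for fixed $d$; this size control is what drives the final exponent, so I would assume the Phase-1 algorithms return fronts of this size.

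Second, I would feed $E$ and $F$ to Algorithm 1, whose output is $\EFF[R[E,F]]=\text{MO-CR}[E,F]$. Lemma~\ref{lem:approx} supplies correctness almost for free: it gives both $R[E,F]\subseteq R[\CE,\CF]$ and the covering estimate~(\ref{eq:approx:R}), namely that every $\bm{\rho}\in R[\CE,\CF]$ is weakly dominated, after scaling by $(1+\varepsilon_1)(1+\varepsilon_2)$, by some $\bm{\rho}'\in R[E,F]$. I then need only check that passing to $\EFF$ preserves this guarantee: each $\bm{\rho}'\in R[E,F]$ is itself weakly dominated by some $\bm{\rho}''\in\EFF[R[E,F]]$, and since scaling by the positive factor $(1+\varepsilon_1)(1+\varepsilon_2)$ is monotone for $\WPP$, transitivity of $\WPP$ yields $(1+\varepsilon_1)(1+\varepsilon_2)\bm{\rho}''\WPP\bm{\rho}$. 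Hence $\text{MO-CR}[E,F]$ consists only of valid ratios ($\subseteq R[\CE,\CF]$) and $(1+\varepsilon_1)(1+\varepsilon_2)$-covers $R[\CE,\CF]$ in the sense of Definition~\ref{def:covering}, which is exactly approximation in the sense of Equation~(\ref{eq:approx:R}).

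Third, for the cost of Phase 2 I would apply the stated complexity of Algorithm 1, $O((qm)^{2d-1}d)$, with $q=|E|$ and $m=|F|$. Substituting the size bounds gives $|E|\cdot|F|=O\big((\varepsilon_1\varepsilon_2)^{-(d-1)}\big)$, so $(|E||F|)^{2d-1}=O\big((\varepsilon_1\varepsilon_2)^{-(d-1)(2d-1)}\big)$, the factor $d$ being constant for fixed $d$. Adding the two Phase-1 costs yields the claimed total $O\big(\theta_{\CE}(\varepsilon_1,L)+\theta_{\CF}(\varepsilon_2,L)+(\varepsilon_1\varepsilon_2)^{-(d-1)(2d-1)}\big)$. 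The arithmetic of the exponents is immediate once the sizes are in hand; the only real friction I anticipate is the bookkeeping of \emph{which} set each approximation covers and in \emph{which direction} --- reconciling the ``$E$ approximates $\CE$'' phrasing of Lemma~\ref{lem:approx} with the fact (noted in Section 5) that $R[\CE,\CF]$ depends on $\CE$ only through $\WST[\CE]$, and verifying that the two one-sided covering factors of Definitions~\ref{def:under:covering} and~\ref{def:stick:covering} compose into the single multiplicative factor $(1+\varepsilon_1)(1+\varepsilon_2)$ without leaking a direction. That one-sidedness --- under-covering $\CE$ from below and stick-covering $\CF$ from the efficient side --- is precisely what makes the ratio $\bm{y}/\bm{z}$ move the correct way, and is the step I would write out most carefully.
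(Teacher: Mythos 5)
Your proposal is correct and follows essentially the same route as the paper's own (very terse) proof: run the two Phase-1 algorithms, invoke Lemma~\ref{lem:approx} for correctness, and substitute the size bounds $|E|=O((1/\varepsilon_1)^{d-1})$, $|F|=O((1/\varepsilon_2)^{d-1})$ into Algorithm 1's $O((qm)^{2d-1}d)$ complexity. Your added check that passing to $\EFF[R[E,F]]$ preserves the covering guarantee is a detail the paper leaves implicit, and it is a worthwhile one to spell out.
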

For MO graphical games, Phase 1 could be instantiated with approximate junction-tree algorithms on MO graphical models \cite{dubus2009multiobjective}. For MO symmetric action-graph games, in the same fashion, one could generalize existing algorithms \cite{jiang2007computing}. More generally, for the worst equilibria $\WST[\CE]$ and the efficient outcomes $\CF$, one could also use meta-heuristics with experimental guarantees.

\section{Conclusion: discussion and prospects}

Along with equilibrium existence, potential functions also usually
guarantee the convergence of best-response dynamics. This easily generalizes
to dynamics where every deviation step is an individual Pareto-improvement.
However, when studying a dynamics based on a refinement of the Pareto-dominance,
convergence is not always guaranteed. 

Pareto-Nash equilibria, which encompass the possible outcomes of MO
games, very likely exist.  The precision
of PN-equilibria inevitably relies on the uncertainty on preferences.
A promising research path would be to linearly constrain the utility
functions of agents. This would induce a polytope and would boil down 
to another MO game where
every objective corresponds to an extreme point of the induced polytope.
%
%
The efficiency of several multi-objective games could be analyzed by using the contributions in this paper.


\bibliographystyle{splncs04}
\bibliography{newMOG}

\section{Proof of Theorem 1}

Let $0_d\in\RE^d$ denote the $d$-dimensional MO vector with $d$ zero components. Let $a\in A$ be an action-profile. To state that $a$ is a $\PE$-equilibrium is equivalent to state that for every agent $i$ and every individual deviation $b^i\in A^i$, it holds that:
$$
u^i(b^i,a^{-i})
\quad\not\PP\quad
u^i(a)
$$
From the definition of a potential $\Phi$, it is equivalent to state that, for every agent $i$ and every individual deviation $b^i\in A^i$, it holds that:
$$
\Phi(b^i,a^{-i})-\Phi(a)
\quad =\quad
u^i(b^i,a^{-i})-u^i(a)
\quad\not\PP\quad
0_d
$$
That is, $\Phi(b^i,a^{-i})\not\PP\Phi(a)$, which means that $a\in\LOC(\Phi)$.

Furthermore, the existence of local optima for the potential function generalizes to the MO case: The set of locally-Pareto-efficient action-profiles is necessarily non-empty, otherwise, given $t\in\mathbb{N}$, whatever the action-profile $a_{(t)}$, one could always find an action-profile $a_{(t+1)}$ in its neighbourhood of individual deviations, such that $\Phi(a_{(t+1)})\PP\Phi(a_{(t)})$. Therefore, one could build an infinite sequence $(a_{(t)})_{t\in\mathbb{N}}$ such that $\Phi(a_{(t+1)})\PP\Phi(a_{(t)})$; and since the Pareto-dominance $\PP$ is a strict partial order and $\Phi$ a (deterministic) function, one would have an infinite number of distinct action-profiles, contradicting the fact that $|A|\leq \alpha^n$ is finite.

\section{Proof of Theorem 2}

We will denote by $\PROBA_{n,\alpha,\pi}$ the probability distribution that draws a normal form game (SO or MO) with $n$ agents, $\alpha=\alpha^i=|A^i|$ actions-per-agent, and the payoffs $u^i(a)$ according to the distribution $\pi$ on $\RE$ or $\RE^d$. Also, according to $\PROBA_{n,\alpha,\pi}$, given an agent $i$ and an action-profile $a=(a^i,a^{-i})$, let us denote by $X_{i,a}\in\{0,1\}$ the random variable (RV) which is equal to $1$ if and only if for agent $i$, the action $a^i$ is a best response (or efficient response) to the adversary action-profile $a^{-i}$. Given an action-profile $a$, let us denote by $Y_{a}=\min_{i\in N}\{X_{i,a}\}$ the binary RV which is equal to $1$ if and only if the action-profile $a$ is a PN equilibrium. Finally, let $Z=\sum_{a\in A} Y_{a}$ denote the number of pure Nash-equilibria action-profiles in the game. For simplicity, we may use the name of a binary random variable as a shorthand for the event that this RV equals $1$. Since for every agent $i$ and every adversary action-profile $a^{-i}$ there is (almost surely) only one best-response $b^i\in A^i$ in $u^i(A^i,a^{-i})$ (because payoffs are almost surely different), an IID uniform distribution on $[0,1]$ amounts to whatever IID distribution that will almost surely draw uniformly one single best-response in $u^i(A^i,a^{-i})$.

\textit{Generalization to multi-objective.} While in the SO case, there is almost surely only one best-response, when considering MO games, the main technical difference lies in the average number of ``best-responses'' (or here, Pareto-efficient responses) which is in most cases greater than $1$, due to the $(d-1)$ dimensional surface-like shape of the Pareto-efficient set in $\RE^d$. For instance, it can be shown that when drawing a number $\alpha$ of MO payoffs according to a uniform distribution on the simplex $\mathcal{S}_{\OBJ}=\{u\in\RE_+^d\mid \sum_{k\in\OBJ}u_k\leq 1\}$, then by counting the vectors on the outer face, the number $\beta$ of Pareto-efficient vectors among the $\alpha$ vectors satisfies: 
\begin{eqnarray}
\ESP[\beta] & \quad \sim \quad &
\frac{d}{(d!)^{1/d}}~~\alpha^{\frac{d-1}{d}}
\quad\quad\quad\text{as}\quad
\alpha\rightarrow\infty
\label{eq:integfac}
\end{eqnarray}
in the sense that the ratio of the left and right members of $\sim$ tends to $1$. As a consequence, on the simplex $\mathcal{S}_{\OBJ}$, one quickly has a number of Pareto-efficient responses $\beta$ strictly greater than $1$ as the number of actions $\alpha$ grows. (The number of objectives $d$ is fixed.)
 
For the sake of simplicity, we then assume a probability distribution $\PROBA_{n,\alpha,\beta}$, that builds randomly an $n$-agents normal form game with $\alpha$ actions-per-agent. For the sake of simplicity, for every agent $i$, and every adversary action-profile $a^{-i}\in\prod_{j\neq i} A^j$, there is a fixed number $\beta:1\leq\beta\leq\alpha$ of Pareto-efficient responses   (supposedly, according to some vectorial payoffs $u^i(A^i,a^{-i})$ selected independently and uniformly at random in $A^i$). Recall that the number $\beta$ can be reasonably supposed greater than $1$ (see Equation \ref{eq:integfac}).

\medskip{}

\cite{bienayme1867} Recall that the Bienaym\'e-Tchebychev inequality states that for a random variable $Z$ with expectancy $\ESP[Z]$ and variance $\VAR[Z]$, for every parameter $\mu\in\RE_+$ it holds that:
\begin{eqnarray*}
\PROBA(|Z-\ESP[Z]|\geq \mu) &\leq & \frac{\VAR[Z]}{\mu}
\end{eqnarray*}
In simple words, a random variable is unlikely to spread more than its variance. 

\medskip{}

Let us now study the expectation of the number of PN-equilibria $\ESP_{n,\alpha,\beta}[Z]$. One has:  \begin{eqnarray} \ESP_{n,\alpha,\beta}\left[Z\right]  &=&  \ESP_{n,\alpha,\beta}\left[\sum_{a\in A} Y_a \right] \label{eq14}\\ &=&  \sum_{a\in A} \ESP_{n,\alpha,\beta}\left[ Y_a \right] \label{eq15}\\ &=&  \sum_{a\in A} \ESP_{n,\alpha,\beta}\left[ \min_{i\in N} X_{i,a} \right] \label{eq16}\\ &=&  \sum_{a\in A} \PROBA_{n,\alpha,\beta}\left(\wedge_{i\in N} \{X_{i,a}\} \right) \label{eq17}\\ &=&  \sum_{a\in A} \prod_{i\in N} \PROBA_{n,\alpha,\beta}\left( X_{i,a} \right) \label{eq18}\\ &=&  \sum_{a\in A} \prod_{i\in N} \frac{\beta}{\alpha} \label{eq19} \\ &=&  \alpha^n (\beta/\alpha)^n \label{eq110}\\ &=& \beta^n \label{eq111} \end{eqnarray} Equation (\ref{eq14}) uses the definition of the RV $Z$. Equation (\ref{eq15}) uses the linearity of expectation. Equation (\ref{eq16}) uses the definition of the RV $Y_a$. Equation (\ref{eq17}) formulates it as an event. Equation (\ref{eq18}) uses the independence of payoffs between agents. Equation (\ref{eq19}) uses the definition the probability $\PROBA_{n,\alpha,\beta}$: uniform. Equation (\ref{eq110}) uses that $|A|=\alpha^n$ and that $\prod_{i\in N}(\beta/\alpha)=(\beta/\alpha)^n$. Equation (\ref{eq111}) concludes that: $\ESP[Z]=\beta^n$. Therefore, the number of PN-equilibria $Z$ is in expectation an exponential of basis $\beta$ with respect to the number of agents $n$. Let us now study the variance of the number of PN-equilibria $Z$: \begin{eqnarray}  & &\VAR\left(Z\right) \\ &=& \VAR\left(\sum_{a\in A} Y_a \right)\label{eq:112}\\ &=& \sum_{a\in A} \sum_{b\in A} \COV(Y_a,Y_b)\label{eq:113}\\ &=& \sum_{a\in A} \sum_{b\in A} \ESP[Y_a Y_b] - \ESP[Y_a] \ESP[Y_b]\label{eq:114}\\ &=& \sum_{b\in A} \sum_{a\in A} \left(\prod_{i\in N}\PROBA(X_{i,a} X_{i,b}) - \left(\frac{\beta}{\alpha}\right)^{2n}\right)\label{eq:115} \end{eqnarray} Equation (\ref{eq:112}) uses the definition of the RV $Z$. Equation (\ref{eq:113}) is the variance of the sum of RVs $\sum_{a\in A} Y_a$. Equation (\ref{eq:114}) uses the definition of the covariance $\COV(Y_a,Y_b)$. In Equation (\ref{eq:115}) the first terms $\ESP[Y_a Y_b]=\prod_{i\in N}\PROBA(X_{i,a} X_{i,b})$ result from the independences of payoffs between players. The second terms $\ESP[Y_a]\ESP[Y_b]=(\beta/\alpha)^{2n}$ result from the same calculus as for the expectation $\ESP[Z]$. Remark that by symmetry, all the $\alpha^n$ terms of the outer sum are equal. Fixing an action-profile $b\in A$, let us continue this calculus below: \begin{eqnarray} \VAR\left(Z\right) &=&  \alpha^n \sum_{a\in A} \left(\prod_{i\in N}\PROBA(X_{i,a} X_{i,b}) - \left(\frac{\beta}{\alpha}\right)^{2n}\right)\label{eq:116} \end{eqnarray}Now, having fixed an action-profile $b\in A$, given an action-profile $a$ and an agent $i$, let us study the value of the probability $\PROBA(X_{i,a}, X_{i,b})$. Remark that it will depend on whether the random variables $X_{i,a}$ and  $X_{i,b}$ are independent or not: \begin{itemize} \item If $a^{-i}\neq b^{-i}$, then the payoffs are independent, and one has the probability: $$\PROBA(X_{i,a}, X_{i,b})=\PROBA(X_{i,a})\PROBA(X_{i,b})=(\beta/\alpha)^2$$ \item If $a^{-i}= b^{-i}$ with $a^i\neq b^i$, then the payoffs are dependent, and one has the probability: $$\PROBA(X_{i,a}, X_{i,b})=\PROBA(X_{i,a}~|~X_{i,b}) \PROBA(X_{i,b})=\frac{(\beta-1)\beta}{\alpha^2}$$ \item Finally, if $a=b$, then $\PROBA(X_{i,a}, X_{i,b})=\PROBA(X_{i,a})=\beta/\alpha$. \end{itemize}Now, (having fixed an action-profile $b\in A$) let us study the terms in the sum $\sum_{a\in A}$. Given an action-profile $a\in A$, one has:\\[2ex] $\bullet$ If $a=b$, which occurs exactly once, then $\PROBA(X_{i,a}, X_{i,b})=\beta/\alpha$, and the term $\prod\nolimits_{i\in N}\PROBA(X_{i,a} X_{i,b}) - \left(\beta/\alpha\right)^{2n}$ equals $(\beta/\alpha)^n - \left(\beta/\alpha\right)^{2n}$.\\[2ex] $\bullet$ If for some agent $i$, it holds that $a^{-i}= b^{-i}$ with $a^i\neq b^i$, then a distinct agent $j$ cannot satisfy $a^{-j}= b^{-j}$, because of $a^i\neq b^i$; hence the other agents (other than agent $i$) fall into the case of $a^{-j}\neq b^{-j}$. This occurs exactly $n(\alpha-1)$ times, and then while it holds that $\PROBA(X_{i,a}, X_{i,b})=(\beta-1)\beta/\alpha^2$ for agent $i$, for the other agents $j$, it holds that $\PROBA(X_{j,a}, X_{j,b})=(\beta/\alpha)^2$. Therefore, the term $\prod\nolimits_{i\in N}\PROBA(X_{i,a} X_{i,b}) - \left(\beta/\alpha\right)^{2n}$ equals $((\beta-1)\beta/\alpha^2)(\beta/\alpha)^{2n-2} - \left(\beta/\alpha\right)^{2n}$, that is:\\[2ex] \begin{eqnarray*} \prod\nolimits_{i\in N}\PROBA(X_{i,a} X_{i,b}) - \left(\beta/\alpha\right)^{2n} &=&  \frac {(\beta-1)\beta^{2n-1}-\beta^{2n}} {\alpha^{2n}}\\ &=& \frac {-\beta^{2n-1}} {\alpha^{2n}} \end{eqnarray*} $\bullet$ In the last case, if for every agent $i$, it holds that $a^{-i}\neq b^{-i}$, then the term cancels.\\[2ex] To conclude, the variance of the number of Pareto-Nash equilibria is: \begin{eqnarray} &&\VAR\left(Z\right) \\ &=&  \alpha^n \sum_{a\in A} \left(\prod\nolimits_{i\in N}\PROBA(X_{i,a} X_{i,b}) \quad-\quad \left(\beta/\alpha\right)^{2n}\right)\label{eq:117}\\ &=&  \alpha^n \left((\beta/\alpha)^n - (\beta/\alpha)^{2n}-n(\alpha-1)\beta^{2n-1}/ \alpha^{2n}\right)\label{eq:118}\\ &=& \beta^n \left( 1 - (\beta/\alpha)^n - n(\alpha-1)\beta^{n-1}/\alpha^{n}\right)\\ &=& \beta^n \left( 1 - \quad(\beta/\alpha)^n(1+n(\alpha-1)/\beta)\quad\right)\\ &\leq & \beta^n  \end{eqnarray} To finish, since we have an expectation $\ESP[Z]=\beta^n$ and a variance $\VAR(Z)\leq\beta^n$, a straightforward use of the Bienaym\'e-Tchebychev inequality concludes that for any given number $\gamma\in(0,1)$, it holds that: \begin{eqnarray} \PROBA(|Z-\beta^n|\leq \gamma \beta^n) &\geq & 1 - \frac{\beta^n}{\gamma^2 \beta^{2n}}\\ & = & 1 - \frac{1}{\gamma^2 \beta^{n}} \end{eqnarray}

\section{Proofs concerning the properties of the multi-objective coordination ratio}
%
%

We show that Definition \ref{def:MOCR} satisfies the following.
Given $\CE,\CF\subset\REp$ and $\bm{r}\in\REp$: 
\begin{eqnarray} 
\mbox{MO-CR}[\CE,\CF] 
&\quad\subseteq& 
\REp\label{eq:ratio:6:bis}\\ 
\mbox{MO-CR}[\{\bm{0}\},\CF] 
&\quad=& 
\{\bm{0}\}\label{eq:ratio:7:bis}\\ 
\mbox{MO-CR}[\bm{r}\star\CE,\CF] 
& \quad=& 
\bm{r}\star\mbox{MO-CR}[\CE,\CF]\label{eq:ratio:8:bis}\\ 
\mbox{MO-CR}[\CE,\bm{r}\star\CF] 
& \quad=& 
\mbox{MO-CR}[\CE,\CF]/\bm{r}\label{eq:ratio:9:bis}\\ 
\CE\subseteq\CF 
& \quad\Leftrightarrow& 
\bm{1}\in \mbox{MO-CR}[\CE,\CF]\label{eq:ratio:10:bis} 
\end{eqnarray}

Property (\ref{eq:ratio:6:bis})\quad By definition, set $\mbox{MO-CR}[\CE,\CF]$ is a set of vectors in $\REd$.

Property (\ref{eq:ratio:7:bis})\quad If $\CE=\{\bm{0}\}$, then the condition $\rho\in R[\CE,\CF]$, which is $\forall \bm{y}\in\CE,$ $\quad\exists \bm{z}\in\CF,$ $\bm{y}/\bm{z}\WPP\bm{\rho}$, rewrites $\bm{0}\WPP \bm{\rho}$. Then $\EFF[\CC(\bm{0})]=\{\bm{0}\}$.

Property (\ref{eq:ratio:8:bis})\quad We just need to show that $R[\bm{r}\star\CE,\CF]=\bm{r}\star R[\CE,\CF]$. Condition $\rho\in R[\bm{r}\star\CE,\CF]$ rewrites into $\forall \bm{y}\in\CE,\quad\exists \bm{z}\in\CF,\quad \bm{r}\star\bm{y}/\bm{z}\WPP\bm{\rho}$. Then one has $\rho\in \bm{r}\star R[\CE,\CF]$. The converse also holds by a similar argument.

Property (\ref{eq:ratio:9:bis})\quad Similarly, one can show that $R[\CE,\bm{r}\star\CF]=R[\CE,\CF]/\bm{r}$.

Property (\ref{eq:ratio:10:bis})\quad First, note that since $\CF$ dominates $\CE$, it is not possible to have $\bm{\rho}\PP\bm{1}$ in $R[\CE,\CF]$. Second, for every $\bm{y}\in\CE$, one can then take $z=y$, and since $\bm{1}/\bm{1}\WPP \bm{1}$, one has $\bm{1}\in R[\CE,\CF]$. One can also show that if $\CE\not\subseteq\CF$ then $\bm{1}\not\in \mbox{MO-CR}[\CE,\CF]$.

\section{Proof of Theorem 3}

The computation of the best equilibria outcomes $\EFF[\CE]$ can be achieved by (1) computing the PN equilibria $\PE\subseteq A$, then (2) computing the equilibria outcomes $\CE=u(\PE)\subseteq \RE^d$ and finally (3) computing the best equilibria outcomes $\EFF[\CE]\subseteq\CE$ (or the worst ones  $\WST[\CE]\subseteq\CE$). 

(1) For this purpose, for every agent $i\in N$ and each adversary action profile $a^{-i}\in A^{-i}$, one has to compute which individual actions give a Pareto-efficient evaluation in $u^i(A^i,a^{-i})$ (which takes time $O(\alpha^2d)$, or if $d=2$ then $O(\alpha\log_2(\alpha))$), in order to mark which action-profiles can be a PN equilibrium from $i$'s point of view. Hence, computing $\PE$ takes time $O(n \alpha^{n-1} \alpha^2 d)$ (or if $d=2$ $O(n \alpha^{n} \log_2(\alpha) )$). In the worst case, $\PE=A$ hence $|\PE|=O(\alpha^n)$. 

Then, (2) computing the image through total-utilitarianism $\CE=u(\PE)$ requires for each $a\in \PE$ the addition of $n$ vectors, in time $nd|\PE|=O(n\alpha^n d)$. 

(3) Finally, the computation of $\EFF[\CE]$ given $\CE$ takes time $O(|\CE|^2 d)=O(\alpha^{2n} d)$; and the same holds for $\WST[\CE]$. To sum up, the computation of $\EFF[\CE]$ (or of $\WST[\CE]$) takes time $O(n \alpha^{n+1} d + \alpha^{2n} d)$. If $d=2$, this significantly lowers to $O(n\alpha^n \log_2(\alpha))$, by using a data structure (e.g. an AVL tree) that orders vectors according to the first objective and does comparisons on the second objective.

\section{Proof of Theorem 4}

Since the game is symmetric, every configuration $c$ represents an equivalence class in the set of action-profiles $A$; hence, a set of configurations represents a subset of the action-profiles. Therefore, in order to compute the set of Pareto-Nash equilibria $\PE\subseteq A$, a set of configurations is an acceptable output and even a more compact one. The problem to decide if a given configuration $c$ is a (pure-strategy) Pareto-Nash equilibrium is easy: one only has to test for every action $a^\ast\in A^\ast$ such that\footnote{That is such that the action $a^\ast$ is decided by someone.} $c(a^\ast)\geq 1$, if that action is a Pareto-efficient individual decision. An individual deviation to another action $b^\ast\in A^\ast$ induces the configuration $c'$ obtained from the configuration $c$ by subtracting 1 from the number $c(a^\ast)$ and adding 1 to the number $c(b^\ast)$ of agents deciding the action $b^\ast$. Therefore, testing if a configuration $c$ is a $\PE$ equilibrium takes time $O(\alpha^2 d)$. As a consequence, the computation of the set of $\PE$ equilibria takes times $O(n^\alpha \alpha^2 d)$, that is poly-time $O(L\alpha)$. Also, computing an utilitarian evaluation $u(c)=\sum_{a^\ast\in A^\ast}c(a^\ast)u^{\ast}(a^\ast,c)\in\RE^d$ requires $O(\alpha)$ multiplications and additions; hence computing the set of equilibria outcomes $\CE=u(\PE)$ (starting from the set $\PE$ which size is $O(L)$) also takes poly-time $O(L\alpha)$.
Since the number of equilibria outcomes is bounded by the number of configurations,  it follows that computing the sets of best and worst equilibria $\EFF[\CE]$ and $\WST[\CE]$ takes time $O(n^{2\alpha} d)$, that is poly-time $O(L^2)$.

\section{Proof of Theorem 5}

In order to compute $\text{MO-CR}=\EFF[R[\WST[\CE],\CF]]$, let us study the structure of $\bigcap_{y\in\WST[\CE]}\bigcup_{z\in\CF}\CC(y/z)$, by restricting a set-algebra to the following objects: 

\begin{definition}[Cone-Union] \label{def:coneunion} For set of vectors $X\subseteq\REp$, Cone-Union $\CC(X)$ is: $$ \CC(X) \quad=\quad\bigcup_{x\in X}\CC(x) \quad=\quad\{y\in\REp ~~|~~ \exists x\in X, x\WPP y\} $$  Let $\CC$ denote the set of all cone-unions of $\REp$. \end{definition} 

To define an algebra on $\CC$, one can supply $\CC$ with $\cup$ and $\cap$. \begin{lemma}[On the Set-Algebra $(\CC,\cup,\cap)$]\label{prop:algebra}~\\ Given two descriptions of cone-unions $X^1,X^2\subseteq\REp$, we have: $$\CC(X^1)\cup\CC(X^2)\quad=\quad\CC( X^1\cup X^2 )$$ Given two descriptions of cones $x^1,x^2\in\REp$, we have: $$\CC(x^1)\cap\CC(x^2)\quad=\quad\CC(x^1\wedge x^2)$$ where $x^1\wedge x^2\in\REp$ is: $\forall k\in\OBJ, (x^1\wedge x^2)_k=\min\{x^1_k,x^2_k\}$.\\ Given two descriptions of cone-unions $X^1,X^2\subseteq\REp$, we have: \begin{eqnarray*} \CC(X^1)\cap\CC(X^2) &=&\left(\cup_{x^1\in X^1}\CC(x^1)\right)\cap\left(\cup_{x^2\in X^2}\CC(x^2)\right)\\ &=&\bigcup_{(x^1,x^2)\in X^1\times X^2}\CC(x^1)\cap\CC(x^2)\\ &=&\bigcup_{(x^1,x^2)\in X^1\times X^2}\CC(x^1\wedge x^2)\\ &=&\CC( X^1\wedge X^2 ) \end{eqnarray*} where $X^1\wedge X^2=\{x^1\wedge x^2~|~x^1\in X^1,~x^2\in X^2\}\subseteq\REp$.\\ Therefore, $(\CC,\cup,\cap)$ is stable, and then is a set-algebra. \end{lemma}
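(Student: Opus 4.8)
The plan is to establish the three displayed identities in turn and then read off stability, treating the union identity and the single-cone intersection identity as the two atomic facts and obtaining the general intersection identity as their combination via distributivity. Throughout I would use the unfolded description $\CC(\bm{x})=\{\bm{y}\in\REp\mid \forall k\in\OBJ,\ x_k\geq y_k\}$ coming from $\bm{x}\WPP\bm{y}$.

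\textbf{Union of cone-unions.} First I would dispatch $\CC(X^1)\cup\CC(X^2)=\CC(X^1\cup X^2)$. Unfolding $\CC(X)=\bigcup_{\bm{x}\in X}\CC(\bm{x})$, the left-hand side is $\left(\bigcup_{\bm{x}\in X^1}\CC(\bm{x})\right)\cup\left(\bigcup_{\bm{x}\in X^2}\CC(\bm{x})\right)$, which by associativity and commutativity of set-union is exactly $\bigcup_{\bm{x}\in X^1\cup X^2}\CC(\bm{x})=\CC(X^1\cup X^2)$. This step is purely formal.

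\textbf{Intersection of two cones (the crux).} The heart of the lemma is $\CC(\bm{x}^1)\cap\CC(\bm{x}^2)=\CC(\bm{x}^1\wedge\bm{x}^2)$. A vector $\bm{y}$ lies in the intersection iff for every objective $k\in\OBJ$ both $x^1_k\geq y_k$ and $x^2_k\geq y_k$, which holds iff $\min\{x^1_k,x^2_k\}\geq y_k$ for every $k$, i.e. iff $(\bm{x}^1\wedge\bm{x}^2)\WPP\bm{y}$; this is precisely the membership condition for $\CC(\bm{x}^1\wedge\bm{x}^2)$. The only substantive observation is the coordinate-wise equivalence ``$a\geq c$ and $b\geq c$'' $\Leftrightarrow$ ``$\min\{a,b\}\geq c$'', applied independently on each of the $d$ objectives; nonnegativity ensures $\bm{x}^1\wedge\bm{x}^2\in\REp$, so the right-hand side is again a legitimate cone.

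\textbf{Intersection of cone-unions and stability.} The general identity then follows, as written, by distributing $\cap$ over $\cup$ to get $\CC(X^1)\cap\CC(X^2)=\bigcup_{(\bm{x}^1,\bm{x}^2)\in X^1\times X^2}\left(\CC(\bm{x}^1)\cap\CC(\bm{x}^2)\right)$, replacing each term by $\CC(\bm{x}^1\wedge\bm{x}^2)$ via the crux step, and recognizing the result as $\CC(X^1\wedge X^2)$ by definition of $X^1\wedge X^2$. Stability is then immediate: the union identity exhibits a description $X^1\cup X^2$ showing $\CC$ is closed under $\cup$, and the intersection identity exhibits a description $X^1\wedge X^2$ showing $\CC$ is closed under $\cap$; hence $(\CC,\cup,\cap)$ is closed under both operations, which is the claimed set-algebra structure. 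I expect no genuine obstacle beyond keeping the quantifier order straight in the crux step and checking that $\wedge$ preserves membership in $\REp$ so that the outputs remain valid cone-union descriptions.
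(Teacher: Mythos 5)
Your proposal is correct and matches the paper's approach: the paper's own proof is just the one-line remark that ``the three properties derive from set calculus,'' and your argument (formal union rewriting, the coordinate-wise equivalence $a\geq c$ and $b\geq c$ iff $\min\{a,b\}\geq c$ for the two-cone intersection, then distributivity of $\cap$ over $\cup$) is precisely that set calculus spelled out. No discrepancy to report.
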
 \begin{proof} The three properties derive from set calculus. \end{proof} The main consequence of Lemma \ref{prop:algebra} is that  $R[\WST[\CE],\CF]=\cap_{y\in\WST[\CE]}\cup_{z\in\CF}\CC(y/z)$ is a cone-union. Moreover, one can do the expansion for $\cap_{y\in\WST[\CE]}\cup_{z\in\CF}\CC(y/z)$ within the cone-unions, using expansions.\begin{remark}\label{rk:app:cone} For a finite set $X\subseteq\REp$, we have: $\CC(X)=\CC(\EFF[X])$. \end{remark}   \begin{proof} Firstly, we prove $\CC(X)\subseteq\CC(\EFF[X])$. If $y\in\CC(X)$, then there exists $x\in X$ such that $x\WPP y$. There are two cases, $x\in\EFF[X]$ and $x\not\in\EFF[X]$. If $x\in\EFF[X]$, then $y\in\CC(\EFF[X])$, by definition of a cone-union. Otherwise, if $x\not\in\EFF[X]$, then there exists $z\in X$ such that $z\PP x$. And since $X$ is finite, we can find such a $z$ in $\EFF[X]$, by iteratively taking $z'\PP z$ and $z\leftarrow z'$, until $z'\in\EFF[X]$, which will happen because $X$ is finite and $\PP$ is transitive and irreflexive. Hence, there exists $z\in\EFF[X]$ such that $z\PP x\WPP y$ and then $z\PP y$. Consequently, $y\in\CC(\EFF[X])$, by definition of a cone-union.
Conversely, $Y\subseteq X\Rightarrow \CC(Y)\subseteq\CC(X)$  proves $\CC(\EFF[X])\subseteq\CC(X)$. \end{proof}As a consequence of Remark \ref{rk:app:cone}, for $x\in\REp$, a simple cone $\CC(x)$ is fully described by its apex $x$. The main consequence of this remark is that $\CC(X)$ can be fully described and represented by $\EFF[X]$. For instance, since $R[\WST[\CE],\CF]$ is a cone-union (thanks to Lemma \ref{prop:algebra}), and since $\text{MO-CR}=\EFF[R[\WST[\CE],\CF]]$ (by definition of the MO-CR), then $R[\WST[\CE],\CF]$ is fully represented (as a cone-union) by the MO-CR, which means that $R[\WST[\CE],\CF]=\CC(\text{MO-CR})$.\\
 Recall that $q=|\WST[\CE]|$ and $m=|\CF|$. In this subsection, we also denote $\WST[\CE]=\{y^1,\ldots,y^q\}$ and $\CF=\{z^1,\ldots,z^m\}$. Let $\CA_q^m$ denote the set of functions $\pi$ from $\{1,\ldots,q\}$ to $\{1,\ldots,m\}$. (We have: $|\CA_q^m|=m^q$.) \begin{corollary}[The cone-union of MO-CR]~\\ Given $\WST[\CE]=\{y^1,\ldots,y^q\}$ and $\CF=\{z^1,\ldots,z^m\}$, we have: $$R[\WST[\CE],\CF]=\bigcup_{\pi\in\CA_q^m}\bigcap_{t=1}^{q} \CC(y^t / z^{\pi(t)})$$ and therefore: $$\text{MO-CR}[\CE,\CF]\quad=\quad\EFF\left[\left\{\bigwedge\nolimits_{t=1}^{q}y^t / z^{\pi(t)}~~|~~\pi\in\CA_q^m\right\}\right]$$ \end{corollary}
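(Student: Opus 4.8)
The plan is to start directly from the definitions and reduce the claimed identity to the generalized distributive law together with the set-algebra of Lemma~\ref{prop:algebra}. Unfolding $R[\WST[\CE],\CF]=\bigcap_{\bm{y}\in\WST[\CE]}R[\bm{y},\CF]$ and $R[\bm{y},\CF]=\bigcup_{\bm{z}\in\CF}\CC(\bm{y}/\bm{z})$, and writing $\WST[\CE]=\{y^1,\ldots,y^q\}$ and $\CF=\{z^1,\ldots,z^m\}$, one obtains the intersection-of-unions form $R[\WST[\CE],\CF]=\bigcap_{t=1}^{q}\bigcup_{s=1}^{m}\CC(y^t/z^s)$. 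This is purely a rewriting of the definitions and carries no difficulty.

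First I would apply the generalized distributive law for sets: an intersection of finitely many unions equals the union, over all choice functions selecting one index per intersected term, of the corresponding intersections. Here a choice function is exactly an element $\pi\in\CA_q^m$, which for every $t\in\{1,\ldots,q\}$ selects an index $\pi(t)\in\{1,\ldots,m\}$. With $A_{t,s}=\CC(y^t/z^s)$ this yields
\[
\bigcap_{t=1}^{q}\bigcup_{s=1}^{m}\CC(y^t/z^s)\quad=\quad\bigcup_{\pi\in\CA_q^m}\bigcap_{t=1}^{q}\CC\big(y^t/z^{\pi(t)}\big),
\]
which is the first displayed equation. The only care needed is to check the two inclusions of the distributive law, which is routine.

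Next I would collapse each inner finite intersection of cones into a single cone. Iterating the pairwise identity $\CC(x^1)\cap\CC(x^2)=\CC(x^1\wedge x^2)$ from Lemma~\ref{prop:algebra} gives $\bigcap_{t=1}^{q}\CC(y^t/z^{\pi(t)})=\CC\big(\bigwedge_{t=1}^{q}y^t/z^{\pi(t)}\big)$ for each fixed $\pi$. Taking the union over $\pi$ and using $\bigcup_{x\in X}\CC(x)=\CC(X)$ then exhibits $R[\WST[\CE],\CF]$ as the cone-union $\CC(S)$ of the finite apex set $S=\{\bigwedge_{t=1}^{q}y^t/z^{\pi(t)}\mid\pi\in\CA_q^m\}$.

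Finally, since $\text{MO-CR}[\CE,\CF]=\EFF[R[\WST[\CE],\CF]]=\EFF[\CC(S)]$ by Definition~\ref{def:MOCR}, it remains to show $\EFF[\CC(S)]=\EFF[S]$. The substance of this last step is that the non-dominated points of a cone-union are exactly its non-dominated apexes: every point of $\CC(S)$ is weakly dominated by some apex, so a non-dominated point must be an apex, and an apex $x$ is dominated inside $\CC(S)$ only if it is dominated by another apex (composing $x'\WPP y\PP x$ into $x'\PP x$). This is essentially the content already recorded in Remark~\ref{rk:app:cone}. I expect this final equality---transferring $\EFF$ from the infinite set $\CC(S)$ to the finite apex set $S$---to be the only genuinely load-bearing step, the rest being bookkeeping with the distributive law and Lemma~\ref{prop:algebra}.
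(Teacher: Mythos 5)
Your proof is correct and follows essentially the same route as the paper's: unfolding the definitions, distributing the intersection over the unions via choice functions $\pi\in\CA_q^m$ (which the paper phrases as paths in a layer graph), collapsing each $\bigcap_{t}\CC(y^t/z^{\pi(t)})$ into a single cone via Lemma~\ref{prop:algebra}, and finally transferring $\EFF$ from the cone-union to its apex set as in Remark~\ref{rk:app:cone}. Your explicit justification of the last step, $\EFF[\CC(S)]=\EFF[S]$, is if anything slightly more careful than the paper's appeal to $R[\WST[\CE],\CF]=\CC(\text{MO-CR})$.
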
 \begin{proof} For the first statement, just think of an expansion. We write down $$R[\WST[\CE],\CF]=\cap_{y\in\WST[\CE]}\cup_{z\in\CF}\CC(y/z)$$ into the layers just below. There is one layer per $y^t$ in $\WST[\CE]=\{y^1,\ldots,y^t,\ldots,y^q\}$: $$ \begin{array}{ccccccccccl} & ( & \CC(\frac{y^1}{z^1}) & \cup & \CC(\frac{y^1}{z^2}) & \cup & \ldots & \cup & \CC(\frac{y^1}{z^m})& )&\text{layer 1}\\ \bigcap & ( & \CC(\frac{y^2}{z^1}) & \cup & \CC(\frac{y^2}{z^2}) & \cup & \ldots & \cup & \CC(\frac{y^2}{z^m})& )&\text{layer 2}\\ &&&&&\vdots\\ \bigcap & ( & \CC(\frac{y^q}{z^1}) & \cup & \CC(\frac{y^q}{z^2}) & \cup & \ldots & \cup & \CC(\frac{y^q}{z^m})& )&\text{layer q} \end{array} $$ Imagine the simple cones $\CC(\frac{y^t}{z^{\pi(t)}})$ as vertices and imagine edges going from each vertex of layer $t$ to each vertex of the next layer $(t+1)$. Let the function $\pi:\{1,\ldots,q\}\rightarrow\{1,\ldots,m\}$ denote a path from layer $1$ to layer $q$, where $\pi(t)$ is the vertex chosen in layer $t$. The expansion into a union outputs as many intersection-terms as paths from the first layer to the last one.   Consequently, in the result of the expansion into an union, each term is an intersection $\bigcap_{t=1}^{q} \CC(y^t / z^{\pi(t)})$. Then one has:
\begin{eqnarray*} R[\WST[\CE],\CF] &=&\bigcup_{\pi\in\CA_q^m}\bigcap_{t=1}^{q} \CC(y^t / z^{\pi(t)})\\ &=&\bigcup_{\pi\in\CA_q^m} \CC\left(\bigwedge_{t=1}^{q} y^t / z^{\pi(t)}\right)\\ &=&\CC\left(\left\{\bigwedge\limits_{t=1}^{q}y^t / z^{\pi(t)}~~|~~\pi\in\CA_q^m\right\}\right) \end{eqnarray*} The second statement results from the first statement, Lemma \ref{prop:algebra} and Remark \ref{rk:app:cone}. That $R[\WST[\CE],\CF]=\CC(\text{MO-CR})$ and then $\EFF[R[\WST[\CE],\CF]]=\text{MO-CR}$ (from Remark 1) concludes the proof. \end{proof} Ultimately, this proves the \textbf{correctness} of Algorithm 1 for the computation of MO-CR, given $\WST[\CE]=\{y^1,\ldots,y^q\}$ and $\CF=\{z^1,\ldots,z^m\}$. It consists in the iterative expansion/construction of the intersection $R[\WST[\CE],\CF]$, which can be seen as dynamic programming on the paths of the layer graph. 
For $k\in\{1,\ldots,q\}$, we denote $D^t$ the description of the cone-union corresponding to the intersection:  $$\CC(D^t)=\cap_{l=1}^{t} \cup_{z\in\CF} \CC(y^l/z)$$ Recursively, for $t>1$,  $\CC(D^t)=\CC(D^{t-1})~\cap~(\cup_{z\in\CF}~\CC(y^{t}/z))$. From Lemma \ref{prop:algebra},  Remark \ref{rk:app:cone} and Corollary 1, in order to construct, we  then have to iterate the following: $$ D^t=\EFF[\{\rho ~\wedge~ (y^t/z)~~|~~\rho\in D^{t-1},~~z\in\CF\}] $$
We now proceed with the \textbf{time complexity} of Algorithm 1. At first glance, since there are $m^q$ paths in the layer graph, then there are $O(m^q)$ elements in MO-CR. Fortunately, they are much less, because we have: \begin{theorem}[MO-CR is polynomially-sized]~\\ \label{th:mopoa:poly} Given a MOG and denoting $d=|\OBJ|$, $q=|\WST[\CE]|$ and $m=|\CF|$, we have: $$|\text{MO-CR}|\leq (qm)^{d-1}$$ \end{theorem}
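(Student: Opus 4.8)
The plan is to read off a sharp bound directly from the explicit formula established in Corollary~1, namely that every vector of $\text{MO-CR}[\CE,\CF]$ is of the form $\bigwedge_{t=1}^{q}\bm{y}^t/\bm{z}^{\pi(t)}$ for some $\pi\in\CA_q^m$, combined with the fact that $\text{MO-CR}=\EFF[\,\cdot\,]$ is an \emph{antichain} for $\PP$ (no two of its vectors are Pareto-comparable, since a dominated element would be discarded by $\EFF$). First I would pin down, for each objective, the finite set of values a single coordinate can take. Fix $k\in\OBJ$ and a vector $\bm{\rho}=\bigwedge_{t=1}^{q}\bm{y}^t/\bm{z}^{\pi(t)}\in\text{MO-CR}$. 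Its $k$-th coordinate is $\rho_k=\min_{1\le t\le q} y^t_k/z^{\pi(t)}_k$, a minimum of ratios whose numerators and denominators come from the given data; hence $\rho_k$ lies in the finite set $V_k=\{\,y^t_k/z^j_k \mid t\in\{1,\dots,q\},\ j\in\{1,\dots,m\}\,\}$, which satisfies $|V_k|\le qm$ for every objective $k$.

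Next I would invoke the standard antichain projection argument. Consider the map $p:\text{MO-CR}\to V_1\times\cdots\times V_{d-1}$ that forgets the last coordinate, $p(\bm{\rho})=(\rho_1,\dots,\rho_{d-1})$. I claim $p$ is injective. Indeed, suppose $\bm{\rho},\bm{\rho}'\in\text{MO-CR}$ are distinct yet agree on the first $d-1$ coordinates; then they differ on objective $d$, say $\rho_d>\rho'_d$. Since $\rho_k=\rho'_k\ (\ge\rho'_k)$ on every objective $k<d$ and $\rho_d>\rho'_d$, we get $\bm{\rho}\PP\bm{\rho}'$, contradicting that $\text{MO-CR}$ is an antichain. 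Therefore distinct elements of $\text{MO-CR}$ have distinct projections.

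Combining the two observations closes the argument: the image of $p$ is contained in $V_1\times\cdots\times V_{d-1}$, a set of cardinality at most $\prod_{k=1}^{d-1}|V_k|\le (qm)^{d-1}$, and because $p$ is injective this is also an upper bound on $|\text{MO-CR}|$.

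The coordinate-value bound and the final counting are routine; the only genuine content is the injectivity of the projection, which is exactly the classical fact that a Pareto-antichain in $\REp$ is determined by any $d-1$ of its coordinates. I expect no real obstacle here — the one point to state with care is that each coordinate $\rho_k$, being a minimum over $t$ of the ratios $y^t_k/z^{\pi(t)}_k$, always coincides with one of the $qm$ candidate values $y^t_k/z^j_k$, so that the value sets $V_k$ are honestly of size at most $qm$.
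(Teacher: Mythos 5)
Your proof is correct and follows essentially the same route as the paper: both arguments first observe that each coordinate $\rho_k$ of an element $\bm{\rho}=\bigwedge_{t=1}^{q}\bm{y}^t/\bm{z}^{\pi(t)}$ must equal one of the at most $qm$ candidate ratios $y^t_k/z^j_k$, and then gain the factor $d-1$ from the antichain property of $\EFF$ (the paper fixes the last $d-1$ coordinates and notes the first is then uniquely maximized, which is the same projection-injectivity fact you prove). No gap; nothing further needed.
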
 \begin{proof} Given $\rho\in\text{MO-CR}$, for some $\pi\in\CA_q^m$, we have $\rho=\bigwedge\nolimits_{t=1}^{q}y^t / z^{\pi(t)}$, and then $\forall k\in\OBJ, \rho_k=\min_{t=1\ldots q}\{y^t_k / z^{\pi(t)}_k\}$. Therefore, $\rho_k$ is exactly realized by the $k$th component of at least one cone apex $y^t / z^{\pi(t)}$ in the layer graph (that is a vertex in the layer-graph above). Consequently, there are at most as many possible values for the $k$th component of $\rho$, as the number of vertices in the layer graph, that is $qm$. This holds for the $d$ components of $\rho$; hence there are at most $(qm)^d$ vectors in MO-CR. More precisely, by Lemma \ref{lem:eff} (below), since MO-CR is an efficient set, then there are at most $(qm)^{d-1}$ vectors in MO-CR. \end{proof} \begin{lemma}\label{lem:eff} Let $Y\subseteq\REp$ be a set of vectors, with at most $M$  values on each component: $$|~\EFF[Y]~|\leq M^{d-1}$$ \end{lemma}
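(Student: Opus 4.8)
The plan is to bound $\EFF[Y]$ by constructing an injection from it into a grid whose size is exactly $M^{d-1}$. Fix the projection $\pi:\REp\to\RE^{d-1}$ that forgets the last coordinate, $\pi(\bm{y})=(y_1,\ldots,y_{d-1})$. I would show two things: that $\pi$ restricted to $\EFF[Y]$ is injective, and that its image lies in a set of at most $M^{d-1}$ points. Together these immediately give $|\EFF[Y]|\le M^{d-1}$, since $\EFF[Y]\subseteq Y$.

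First I would establish injectivity, which is the heart of the argument. Suppose $\bm{y},\bm{y}'\in\EFF[Y]$ satisfy $\pi(\bm{y})=\pi(\bm{y}')$, i.e. $y_k=y'_k$ for every $k\in\{1,\ldots,d-1\}$. Comparing the last coordinate leaves three cases. If $y_d=y'_d$ then $\bm{y}=\bm{y}'$. If $y_d>y'_d$, then $\bm{y}\WPP\bm{y}'$ holds on all $d$ coordinates and is strict on coordinate $d$, so $\bm{y}\PP\bm{y}'$; since $\bm{y}\in Y$, this would contradict $\bm{y}'\in\EFF[Y]$. The case $y_d<y'_d$ is symmetric and contradicts $\bm{y}\in\EFF[Y]$. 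Hence only the first case survives, and $\pi$ is injective on $\EFF[Y]$.

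Next I would count the image. By hypothesis each coordinate $k$ takes at most $M$ distinct values over $Y$; write $V_k$ for this finite value-set, so $|V_k|\le M$. Every projected vector $\pi(\bm{y})$ with $\bm{y}\in\EFF[Y]$ lies in $V_1\times\cdots\times V_{d-1}$, whose cardinality is $\prod_{k=1}^{d-1}|V_k|\le M^{d-1}$. Combining this with the injectivity from the previous step yields the claimed bound.

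The argument is self-contained, and the only point that needs care is the strict-domination step: under the definition of $\PP$ used here, two efficient vectors that agree on the chosen $d-1$ coordinates become Pareto-comparable the instant they differ on the remaining coordinate, so at most one of them can belong to the non-dominated set. I do not expect any genuine obstacle beyond stating this comparability cleanly; once the antichain property is in place, the counting is routine.
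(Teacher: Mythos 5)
Your proof is correct and is essentially the paper's own argument: both fix $d-1$ of the coordinates (you drop the last, the paper drops the first) and observe that an efficient set can contain at most one vector per choice of those $d-1$ values, i.e.\ the projection is injective on $\EFF[Y]$, giving the bound $M^{d-1}$. The only difference is which coordinate plays the distinguished role, which is immaterial.
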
 \begin{proof}  For instance, in $\RE^2_+$, considering the $M\times M$ grid in the plane, there is at most one Pareto-efficient vector per column, hence $|\EFF[Y]|\leq M$. Think of each vector as having one  and $d-1$ components. Fixing these last components, a single-objective optimization problem on the first objective occurs. Hence there is one optimum. Furthermore, there are at most $M^{d-1}$ valuations realized on the $d-1$ other components. If you fix the $d-1$ last components, there is at most one Pareto-efficient vector: it maximizes the first component. \end{proof}In Algorithm 1, there are $\Theta(q)$ steps. At each step $t$, from Theorem \ref{th:mopoa:poly}, we know that $|D^{t-1}|\leq (qm)^{d-1}$. Hence, $|\{\rho ~\wedge~ (y^t/z)~~|~~\rho\in D^{t-1},~~z\in\CF\}|\leq q^{d-1} m^d$, and the computation of the efficient set $D^t$ requires time $O((q^{d-1} m^d)^2 d)$. 
Ultimately, Algorithm 1 takes $q$ steps and then time $O(q (q^{d-1} m^d)(qm)^{d-1} d)=O((qm)^{2d-1}d)$. If $d=2$, this lowers to $O((q m)^2\log_2(q m))$, by using a data structure (e.g. an AVL tree) that orders vectors according to the first objective and does comparisons on the second objective.

\section{Proof of Lemma 1}

This proof simply consists in chaining the quantifiers in the definitions, that have been carefully chosen to prove the result.

(1) First, let us  show  $R[E,F]\subseteq R[\WST[\CE],\CF]$. Let $\rho'$ be a ratio of $R[E,F]$ and let us show that: $$ \forall y\in \WST[\CE],~~ \exists z\in \CF,~~ \text{ s.t.: } y\WPP\rho'\star z $$ Take $y\in\WST[\CE]$. From the first condition, there is a $y'\in E$ such that $y\WPP y'$. From MO-CR, there is a $z'$ such that $y'\WPP \rho'\star z'$. From the third condition on $z'$, there exists $z\in\CF$ such that $z'\WPP z$. Recap: $y\WPP y'\WPP \rho'\star z'\WPP \rho'\star z$.

(2) Then, let $\rho$ be a ratio of $R[\WST[\CE],\CF]$, and let us  show that\linebreak $\rho'=(1+\varepsilon_1)^{-1}(1+\varepsilon_2)^{-1}\rho$~~ is in $R[E,F]$, that is: $$ \forall y'\in E,~~ \exists z'\in F,~~ (1+\varepsilon_1) y'\WPP (1+\varepsilon_2)^{-1} \rho\star z' $$ Take an element $y'$ of $E$. From the second condition, there is $y\in\WST[\CE]$ such that $(1+\varepsilon_1)y'\WPP y$.  From MO-CR, there is $z\in\CF$ such that $y\WPP\rho\star z$. From the fourth condition on $z$, there exists $z'\in F$ s.t. $z\WPP (1+\varepsilon_2)^{-1} z'$. Recap: $(1+\varepsilon_1)y'\WPP y\WPP \rho\star z\WPP (1+\varepsilon_2)^{-1} \rho\star z'$.

\section{Proof of Theorem 6}

Applying Algorithm 1 on $E$ and $F$ outputs an $((1+\varepsilon_1)(1+\varepsilon_2))$-covering of $R(\WST[\CE],\CF)$. Moreover, since we have $|E|=O((1/\varepsilon_1)^{d-1})$ and \linebreak $|F|=O((1/\varepsilon_2)^{d-1})$, Algorithm 1 takes time $O\left(d/(\varepsilon_1\varepsilon_2)^{(d-1)(2d-1)}\right)$.

\section{Experiments}

Experiments were conducted to assess the practicality of our polynomial time and approximation algorithms. We used C++STL on a Linux laptop equipped with CPUs at 1.40Ghz. We fixed $|A^i|=2$ actions per agent. For each parameter-values, we averaged the measures over 5 random instances\footnote{Though only 5 random instances  does not sound like much, the measures of cpu-time were already stable}. The evaluations $u^{i}_k(a^{\mathcal{N}(i)})$ are drawn uniformly and independently in $|[1,16]|$. In Table 1 (for MOGs) we have $\mathcal{N}(i)=N$. In Table 2 (for MO graphical games), the games were drawn on grid graphs with dimensions $n=n_1\times n_2$, in order to experiment various treewidths\footnote{For a formal definition of the treewidth, the reader may refer to \cite{dechter1989tree,jensen1994influence} or \cite{ismaili2016computational}.}. We chose $n_2\in\{1,2,3\}$ for the interaction-graph's width, which corresponds to the treewidths $\mathcal{T}\in\{2,4,6\}$. 

\subsection{Computational measures on MO normal forms}
In Table \ref{tab:mog}, we experiment Algorithm 1 on MO games. Table 1's notations are: $d$ for the number of objectives;  $n$ for the number of agents;  T(P1) for the cpu-time (seconds) of Phase 1: computing $\WST[\CE]$ and $\CF$; $m=|\CF|$ and $q=|\WST[\CE]|$;  T(P2) for the cpu-time (seconds) of Phase 2: computing MO-CR given $\WST[\CE]$ and $\CF$;  and finally, (in order to assess the practicality of the algorithm's output) the size of the resulting MO-CR.
\begin{table}[h] 
\centering 
$\begin{array}{c|ccc|ccc|ccc}																			 	&			\multicolumn{3}{c|}{d=2}			&			\multicolumn{3}{c|}{d=3}			&			\multicolumn{3}{c}{d=4}			\\ \hline																			 n	&	4	&	8	&	12	&	4	&	8	&	12	&	4	&	8	&	12	\\ \hline																			 T(P1)	&	0.00	&	0.08	&	2.40	&	0.00	&	0.07	&	2.45	&	0.00	&	0.08	&	2.44	\\ m	&	4.2	&	5.6	&	8.2	&	5.4	&	17.8	&	41.2	&	7.2	&	36.4	&	105.8	\\ q	&	2.2	&	4	&	5.8	&	4.4	&	9.8	&	30.2	&	8	&	37.8	&	82.6	\\ \hline T(P2)	&	0.00	&	0.00	&	0.00	&	0.00	&	0.01	&	0.20	&	0.00	&	0.48	&	30.85	\\ \#\text{MO-CR}	&	3.4	&	4.6	&	6	&	3.2	&	22.8	&	31.8	&	13.6	&	44.4	&	154.8	\\ \hline																			 
\end{array} $\\[1ex]
\caption{Computation times for Phases 1 and 2 on MO normal forms} \label{tab:mog} 
\end{table}	

\indent\textit{Observations.} Recall that the normal form is a representation of size $\Theta(n\alpha^n d)$. For instance, for $d=3$, Phase 1, and $n=4,8,12$, the instance to read is made of $192$, $6144$ and $147456$ scalars. The cpu-time cost of Phase 1 depends directly on the size of this input. For $d\leq 3$, Algorithm 1 costs nothing, compared to Phase 1. For $d\geq 4$, we begin to perceive the explosion of Algorithm 1 (Phase 2), while $m,q\simeq 100$. This indicates a practical intractability for $d\geq 4$. Recall that the cost of Algorithm 1 (Phase 2) for $d=4$ is $O((mq)^7)$.

\subsection{Computational measures on MO graphical games}
In Table 2, we experiment the approximation scheme  on MO graphical games. After Phase 1, we take smaller representations of $\WST[\CE]$ and $\CF$:  a $(1+\varepsilon_1)$-under-covering of $\WST[\CE]$ with $\varepsilon_1=6.5\%$,  and a $(1+\varepsilon_2)$-stick-covering of $\CF$ with $\varepsilon_2=3.5\%$, all in order to ensure a $(1+\varepsilon)$-covering of MO-CR, with $\varepsilon\simeq 6.5\%+3.5\%=10\%$ (thanks to Theorem 6). Table 2's notations are the same as Table 1's, and we add:  $n_2$ for the width of the interaction graph; $m_{\varepsilon}$ for the resulting size (after a proper rounding) of the representation of $\CF$; and $q_{\varepsilon}$ for the resulting size (after a proper rounding) of the representation of $\WST[\CE]$.

\begin{table}[h] 
\centering 
$\begin{array}{c|ccc|ccc|ccc}																			 	&	\multicolumn{9}{c}{d=2}																	\\
 \hline																			 	&	\multicolumn{3}{c|}{n_2=1}					&	\multicolumn{3}{c|}{n_2=2}					&	\multicolumn{3}{c}{n_2=3}					\\
  \hline																			 n	&	60	&	120	&	180	&	60	&	120	&	180	&	60	&	120	&	180	\\
   \hline																			 T(P1)	&	0	&	3	&	13	&	1	&	11	&	43	&	4	&	37	&	159	\\ m	&	92	&	212	&	347	&	76	&	186	&	316	&	65	&	174	&	300	\\ q	&	47	&	120	&	217	&	49	&	134	&	222	&	46	&	134	&	228	\\ \hline																			 m_{\varepsilon}	&	5	&	5	&	4.6	&	5.6	&	5	&	4.4	&	4.6	&	4.8	&	5	\\ q_{\varepsilon}	&	7.4	&	7.4	&	7	&	7.6	&	7.4	&	7.2	&	7	&	7.4	&	7.6	\\ \hline																			 T(P2)	&	0.00	&	0.00	&	0.00	&	0.00	&	0.00	&	0.00	&	0.00	&	0.00	&	0.00	\\ \#\text{MO-CR}	&	3.8	&	2.6	&	3.6	&	3.4	&	3.4	&	3	&	3.6	&	2.4	&	3	\\ \hline																			 \multicolumn{10}{c}{}\\																			 	&	\multicolumn{9}{c}{d=3}																	\\ \hline																			 	&	\multicolumn{3}{c|}{n_2=1}					&	\multicolumn{3}{c|}{n_2=2}					&	\multicolumn{3}{c}{n_2=3}					\\ \hline																			 n	&	12	&	24	&	36	&	12	&	24	&	36	&	12	&	24	&	36	\\ \hline																			 T(P1)	&	0	&	1	&	8	&	0	&	2	&	15	&	0	&	4	&	40	\\ m	&	42	&	263	&	777	&	44	&	249	&	596	&	49	&	190	&	474	\\ q	&	27	&	236	&	645	&	31	&	143	&	506	&	38	&	159	&	448	\\ \hline																			 m_{\varepsilon}	&	15.6	&	19.8	&	22.8	&	16.8	&	27.4	&	26.6	&	17.8	&	22	&	25.2	\\ q_{\varepsilon}	&	24.8	&	45.8	&	53.2	&	26.2	&	59.2	&	63	&	27.8	&	45.2	&	53	\\ \hline																			 T(P2)	&	0.01	&	0.02	&	0.03	&	0.01	&	0.05	&	0.06	&	0.01	&	0.03	&	0.06	\\ \#\text{MO-CR}	&	15.6	&	13.4	&	12.8	&	8.8	&	11.8	&	14.4	&	12.4	&	15	&	12.8	\\ \hline																			 
\end{array}$ \\[1ex]
\caption{Computation times for Approximations on MOGGs} 
\label{tab:mogg} 
\end{table}		

\textit{Observations.} As seen in Table 1 when $m,q\simeq 100$, computing MO-CR would be experimentally intractable, if done directly on $\WST[\CE]$ and $\CF$. Fortunately, thanks to the approximation scheme, Algorithm 1 costs almost nothing on the smaller representations of $\WST[\CE]$ and $\CF$, compared to computation of Phase 1.

\subsection{A raw example of MO-CR}

\begin{figure}
\centering
\includegraphics[scale=0.6]{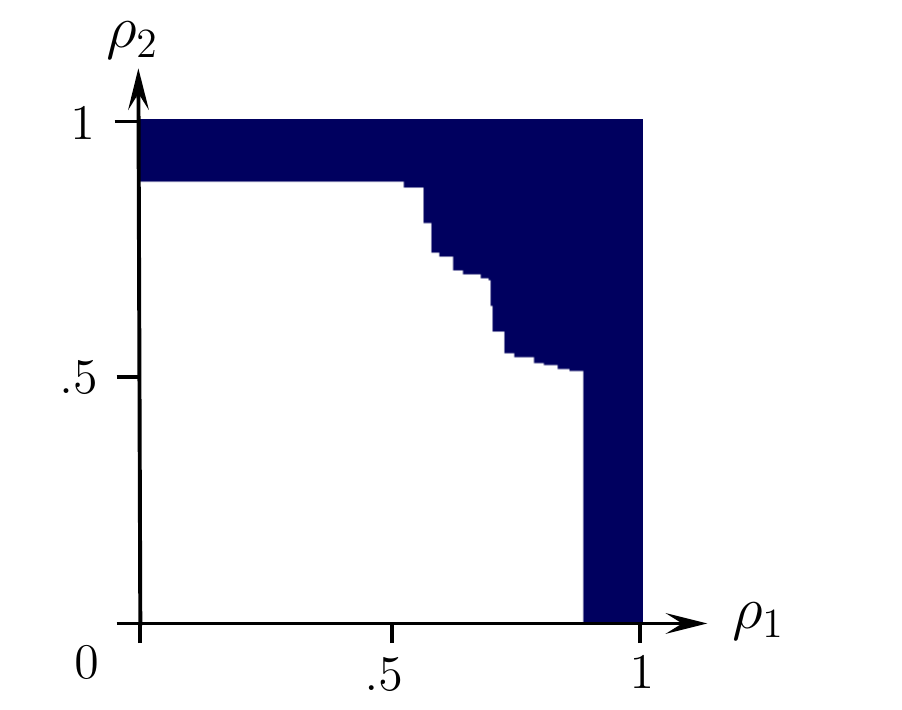}
\caption{MO-CR of one MOG with $n=7$ agents, $d=2$ objectives, $\alpha=3$ actions-per-agent, and independent evaluations drawn uniformly in $\{1,\ldots,100\}$}
\end{figure}
The white part corresponds to the set of guaranteed ratios of efficiency\\ $\rho\in R[\CE,\CF]\cap [0,1]^d$ 
and the dark-blue part to $\rho\notin R[\CE,\CF]$.
Recall that if $\rho\in R[\CE,\CF]$, then $\rho$ guarantees that for each equilibrium-outcome $y\in\CE$,  there exists an efficient-outcome $z^{(y)}$ such that $y \WPP \rho\star z^{(y)}$. Conversely, if $\rho\notin R[\CE,\CF]$, then there exists an \textit{in-efficient} equilibrium $y\in\CE$, that is: such that whatever $z\in\CF$, the guarantee $y \WPP \rho\star z^{(y)}$ does \textit{not} hold.
In other words, for each $\rho\in R[\CE,\CF]$, it holds that each equilibrium has at least $\rho$ times some efficiency.

\end{document}